\newcommand{\sumALL}[1]{\sum_{j_1=1}^{k_1}\sum_{j_2=1}^{k_2}\ldots\sum_{j_{i-1}=1}^{k_{i-1}}\sum_{j_{i+1}=1}^{k_{i+1}}\ldots\sum_{j_n=1}^{k_n} A^i_{j_1 j_2\ldots
j_{i-1}#1 j_{i+1}\dots j_n} x^1_{j_1} x^2_{j_2}\ldots
x^{i-1}_{j_{i-1}} x^{i+1}_{j_{i+1}} \ldots x^n_{j_n}}
\newcommand{\smallsumALL}[1]{\sum_{j_1=1}^{k_1}\sum_{j_2=1}^{k_2}\ldots\sum_{j_n=1}^{k_n} A^i_{j_1 j_2\ldots
j_n} x^1_{j_1} x^2_{j_2} \ldots x^n_{j_n}}
\newcommand{\keywords}[1]{\par\addvspace\baselineskip
\noindent\keywordname\enspace\ignorespaces#1}
\begin{document}

\mainmatter  

\title{Some Algebraic Properties of a subclass of Finite Normal Form Games}

\titlerunning{SOME ALGEBRAIC PROPERTIES OF A SUBCLASS OF FINITE NORMAL FORM GAMES}

\author{Ratnik Gandhi \and Samaresh Chatterji}
\authorrunning{R GANDHI AND S CHATTERJI}

\institute{Dhirubhai Ambani Institute of Information and Communication Technology,\\
Post Bag No. 4, Gandhinagar 382007, India\\
\mailsa\\
\url{http://www.daiict.ac.in}} \toctitle{Some Algebraic Properties
Of a subclass of Finite Normal Form Games} \tocauthor{Ratnik Gandhi
and Samaresh Chatterji} \maketitle

\begin{abstract}
        We study the problem of computing all Nash equilibria of
        a subclass of finite normal form games. With algebraic characterization
        of the games, we present a
        method for computing all its Nash equilibria. Further, we present a method for deciding
        membership to the class of games with its related results.
        An appendix, containing an example to
        show working of each of the presented methods, concludes the
        work.
    \keywords{Galois group, Multivariate Newton-Raphson Method, Nash equilibria,
    Polynomial Algebra, Finite normal form games, Game Theory.}
\end{abstract}

\section{Introduction}
        \noindent Game theory has become an important technique for analyzing
        interactions amongst players (rational decision makers) in a
        competitive scenario, where each player tries to maximize his
        profit.  A fundamental concept in game theory is that of a Nash
        equilibrium in which every player is satisfied with his move. The
        concept was introduced by John Nash in his celebrated
        1951 paper \cite{jfn51}. Nash also proved that every mixed game has a Nash
        equilibrium. Nash's proof was a pure existence proof, and did not
        indicate any methods for computing Nash equilibria.\\

        \noindent In recent years, however, the problem of computing Nash
        equilibria has gained prominence, and has generated substantial
        research literature. In this work we consider methods for computing Nash
        equilibria of finite normal form games -- games known to have finitely many
        solutions\footnote{For the class of games that we consider, Harsanyi\cite{JCH73} shows that
        all the equilibria of the game are isolated and are odd in numbers.} -- that emphasize use of polynomial
        algebra.\\

        \noindent Nash equilibria of a game can be viewed as solutions to a system of equations
        and inequalities defined over payoffs and strategies. More specifically, they are the states of the game in which
        no player can obtain a more favourable outcome by a unilateral
        change of strategy. This system of inequalities can be converted into
        a system of polynomial equations that we call the \textit{game system} ($\mathcal{GS}$).
        We adopt this characterization of Nash equilibria and apply
        polynomial algebra as a computational framework.
        Note that the conversion of inequalities to equalities causes $\mathcal{GS}$ to
        have more solutions then just the Nash equilibria.
        Our
        main objective is to propose a method for computing all
        the Nash equilibria for a suitable  subclass of finite normal form
        games, that serves as an alternate to existing numerical or algebraic
        methods.\\

        \noindent Tight complexity bounds have recently been presented for the problem of computing an equilibrium
        for finite normal form games.\footnote{For $n\geq2$-player games \cite{CD06} and \cite{DGP06} show that
        problem of computing an equilibrium is PPAD-complete.}
        In the light of this and related results, it is of interest to focus on restricted classes of games and develop
        methods for computing their Nash equilibria.
        This is also of value in terms of applications of game theory
        in particular domains.\\

        \noindent Another aspect of our work relates to is that of finding all Nash
        equilibria rather than a single one.
        From an application perspective, this helps in better
        strategic decision making.\\

        \noindent Algorithms for computing all Nash equilibria,
        characterized as solutions of $\mathcal{GS}$, typically iterate the procedure
        for a single solution(sample solution).
        In their investigation of these algorithms,
        McKelvey and McLennan \cite{MM96} raised an important question:
        Whether a method can be found for computing all the
        equilibria of the input game, given a single equilibrium (referred to hereafter as a
        sample equilibrium), without repeating the solution
        procedure for the sample equilibrium.\\

        \noindent Motivated by the question raised by McKelvey and McLennan, and
        the tight complexity bound for computing an equilibrium, we
        consider the problem of computing \emph{all} Nash equilibria
        of \emph{subclass} of finite normal form games.
        The subclass of finite normal form games that we consider have
        all integral payoffs and
        all irrational equilibria. The class of games are called integer payoff
        irrational equilibria(IPIE) games. We develop algorithms for computing all its Nash
        equilibria using a sample solution,\footnote{A sample solution is a solution tuple of $\mathcal{GS}$
        with all irrational coordinates.} thus answering McKelvey and McLennan's question in the
        affirmative for games in this class. We further
        present an algorithm for deciding membership to the class of
        IPIE games. \\

        \noindent Our overall philosophy is to exploit the Galois group of univariate
        polynomial in $\mathcal{I}$ of $\mathcal{GS}$
        along with a single sample solution to extend our knowledge about
        the remaining solutions of the $\mathcal{GS}$, which include all the Nash
        equilibria. It is known that knowledge of the
        Galois group does not presuppose knowledge of all the roots.\footnote{
        A method for
        computing Galois group using Tschirnhaus
        Transformations is presented in \cite{GK00}.} In the
        remaining treatment, it is therefore usually assumed that the Galois
        group of irreducible univariate polynomials in the the ideal $\mathcal{I}$ of $\mathcal{GS}$ is known.\\

        \noindent The primary setting of our work remains the Galois theory over
        commutative rings \cite{chr65}.
        Accordingly, several subsidiary results of an essentially algebraic
        nature are derived in the course of our development. We also briefly
        consider the possibility of games over finite fields.\\

        \subsection{Related Work}

            \noindent To our knowledge, a method for computing Nash equilibria of IPIE
            games with Galois groups has not been considered
            earlier. An algorithm for fast decomposition of univariate
            polynomials, over the field of rational numbers,
            with known Galois group is presented by Enge and
            Morain \cite{EM03}. The
            algorithm decomposes univariate polynomials with Galois or non-Galois field
            extensions. Segal and Ward \cite{sw86} also considers the use of known
            Galois group in the problem of computing weight distributions in
            irreducible cyclic codes. The use of Galois group for computing
            roots of a polynomial is also mentioned in
            \cite{bmk71,Brodzik2006}.\\

            \noindent In our method, we consider systems of
            multivariate polynomials and their solutions. All
            the algebraic extensions we consider are extensions of the ring of
            integers $\mathbb{Z}$. Further, we are able to obtain results
            related to games with payoff values belonging to a discrete set, a
            finite set or a finite field. \\

            \noindent Existing methods for computing Nash equilibrium, such as the
            approach based on the Gr\"{o}bner bases are computationally
            inefficient. Homotopy continuation methods have the added drawback
            that they provide solutions via approximation.
            The method presented in \cite{BVV07} is highly
            dependent on the probability distributions chosen. Our method
            offers a reduction of computational time (compared to
            existing methods). Further, it computes
            exact equilibria for a subclass of IPIE games.\footnote{
            cf. Proposition \ref{result: intgame_closedfrm_equi}.}\\

           \noindent Section 2 outlines the
            underlying model for the class of games that we consider. Section
            3 presents the algorithm for computing all equilibria of
            IPIE games. Results related to the games are
            presented in Section 4. Issue
            relating to the computational complexity of the algorithm is discussed in Section 5. Section 6 outlines a method
            for deciding membership to the class of integer payoff
            games. Section 7 presents our conclusions and suggests directions for future work.
            An appendix, presenting detailed examples to show working of the methods, concludes the work.

        \section{Underlying Model}\label{section_model}
        {

            In this section we give the underlying model for the class of games we
            are going to consider.
            The required definitions have also been
            presented.

                \begin{definition}
                    A strategic finite normal form game $T_S$ is a 3-tuple $ \langle N,S_i,c_i : i \in N \rangle$, where,
                    $N$ is a non-empty finite set of players,
                    $S_i$ is a non-empty finite set of strategies
                    employed by player $i$ and for each player $i$ its payoff
                        $c_i$ is of the form
                        $c_i : \times_{k\in N} S_k \rightarrow \mathbb{R}$.
                \end{definition}
                \noindent Each player $i$'s mixed strategy $\Delta(S_i)$ is a probability
                distribution on its set of pure strategies $S_i$,
                i.e. from $S_i$ player $i$ chooses strategy $j$ with probability $x_j^i$,
                                where $x_j^i \in \Delta(S_i)$.
                A finite normal form game with mixed strategies and
                expected payoff $\alpha_i : \times_{k\in N} \Delta(S_k) \rightarrow
                \mathbb{R}$
                is called \textit{mixed extension} $T_M$ of strategic game $T_S$.\\

            \noindent Let $T_M$ be a finite normal form game with $n = |N|$ players. Each player
            $i$ has $k_i \geq 2$ strategies, $|S_i|=k_i$. We write
            $\mathcal{K}^+ = \sum_{i=1}^n k_i$ and $\mathcal{K}^* =\prod_{i=1}^n
            k_i$.
            $A^i_{j_1 j_2 \ldots j_n}$ denotes the payoff received by player $i$ when
            each player adopts strategy $j_m$ for $1\leq j_m\leq k_m$ and $m=1,\ldots,n$.
            The probability that player $i$ chooses strategy $j_i \in \{1,2,\ldots,k_i\}$
            is denoted by $x^i_{j_i}$,
            \begin{equation}\label{geq_zero_0}
                0 \leq x^i_{j_i} \leq 1.
            \end{equation}
            Moreover, for each player $i$,
            \begin{equation}\label{sum_one}
                \sum_{j_i=1}^{k_i} x^i_{j_i} = 1.
            \end{equation}
            \noindent Expected payoff for player $i$,
            \begin{eqnarray}\label{expected_payoff}
                \alpha_i  &=& \smallsumALL. \nonumber \\
           \end{eqnarray}
                    \begin{definition}
                        Given mixed extension of strategic game, mixed Nash
                        equilibrium is an action profile $\{x_{j_i}^i\}
                        \in \Delta(S_i)\ \forall i,j_i$ such that each player's mixed strategy maximizes
                        his payoff if the strategies of the other players are held fixed.
                    \end{definition}
            \noindent In a Nash equilibrium, the following holds:
            \begin{eqnarray}\label{nash_inequality}
                \alpha_i \geq \sumALL {j_i}, \nonumber\\
                \qquad \qquad \mbox{for every } j_i \in S_i \mbox{ and for every } i \in
                \{1,\ldots,n\}. \nonumber\\
            \end{eqnarray}
            The class of games that we consider can be defined as
            follows.
           \begin{definition}
                Finite normal form games with all integer payoffs and
                all irrational equilibria are called Integer Payoff
                Irrational Equilibria(IPIE) game $T$.
            \end{definition}
            It is clear that  $T \subset T_M$.
            Applying \eqref{geq_zero_0} and \eqref{sum_one} to
            \eqref{nash_inequality}, we obtain a system $\mathcal{GS}$ of
            polynomial equations, called \textit{game system},
            \begin{eqnarray}\label{game_polynomial_0}
                x^i_{j_i}(\alpha_i - \sumALL {j_i}) = 0, \nonumber\\
                \qquad \qquad \mbox{for every } j_i \in S_i \mbox{ and for every } i \in
                \{1,\ldots,n\}. \nonumber\\
            \end{eqnarray}
            \noindent The class of games that we consider has integer payoffs
            and so coefficient $A^i_{j_1 j_2 \ldots j_n} \in \mathbb{Z}$. All equilibria of these games are irrational
            and so \eqref{geq_zero_0} changes to
            \begin{equation}\label{geq_zero}
                0 < x^i_{j_i} < 1.
            \end{equation}
            This changes \eqref{game_polynomial_0} to
            \begin{eqnarray}\label{game_polynomial}
                \alpha_i - \sumALL {j_i} = 0, \nonumber\\
                \qquad \qquad \mbox{for every } j_i \in S_i \mbox{ and for every } i \in
                \{1,\ldots,n\}. \nonumber\\
            \end{eqnarray}
            \noindent For the problem of computing Nash equilibria of IPIE games,
            we characterize games with a $\mathcal{GS}$ of the form
            $\eqref{game_polynomial}$.
            For deciding membership to the class of IPIE games, we characterize
            games with a $\mathcal{GS}$ of the form $\eqref{game_polynomial_0}$.
            Our convention is to write totally mixed real-irrational Nash equilibria
            as irrational Nash equilibria.
            Note that all Nash equilibria of
            the game correspond to solutions of $\mathcal{GS}$, but
            the converse is not necessarily true.\\
            Next, we define some algebraic concepts required for our method.
            \begin{definition}
                Let $G$ be a group and $X$ be a set. Then an
                action of $G$ on $X$ is a function of form $G \times
                X \rightarrow X$.
            \end{definition}
            \noindent We shall be specifically interested in the
            following situation.
            \begin{definition}\label{def_gal_ext_ring}\cite{JLY02}
                Let $S$ be an extension of commutative ring with $R$, i.e.$R$ is a subring of
                $S$. Let $G$ be a finite group acting as
                $R$-algebra automorphisms on $S$. Then we define
                $S^G$ as the subring
                \[
                    S^G = \{s \in S | \forall \sigma \in G, \sigma s =
                    s\},
                \]
                and say that $S$ is a Galois extension with
                group $G$, if
                \begin{itemize}
                    \item $S^G = R$, and
                    \item for any maximal ideal $\mathfrak{m}$
                    in $S$ and any $\sigma \in G \backslash
                    \{1\}$, there is an $s \in S$ such that $\sigma s - s \notin
                    \mathfrak{m}$.
                \end{itemize}
            \end{definition}
            \noindent We now consider the particular situation when
            $S$ is an extension of $R$ of the form
            $R(\alpha)$ where $\alpha$ is a root of a
            polynomial $p(x) \in R[x]$. It is known that the
            Galois group $G$ of the extension $S$ acts as a
            permutation group on the roots of the polynomial
            $p(x)$.\\

            \noindent When Galois group acts on a subset of roots, due
            to group actions, we get the other elements of the set of
            roots.\footnote{Not necessarily distinct.}
            A transitive group action by a Galois group on an element of the root set
            produces all other elements of the root set.
            Transitivity  of a group can be formally defined as
            follows:
           \begin{definition}
                A subgroup $H \subset S_n$ is transitive if
                for every pair of elements $i,j \in
                \{1,2,\ldots,n\}$, there is $ \tau \in H$
                such that $\tau(i)=j$.
            \end{definition}
            \noindent A conjugate element of an element of a set
            is an element generated by a group action on the element.
            Set of all conjugate elements of an element is called an
            orbit of the element under the action of a particular
            group.
            Formally,
            \begin{definition}
                For every $x \in X$ we put $Gx= \{gx : \forall g \in G\}$,
                and call it the orbit of $x$ under $G$, or simply G-orbit of $x$.
            \end{definition}
            \noindent Orbit of an element under Galois group is called
            Galois-orbit of the element.
            If $G$ acts transitively on $X$ then there is only one G-orbit, $X$
            itself.\\

            \noindent For more game theoretic concepts refer \cite{or99},
            and for Galois theory related concepts refer Cox\cite{dc04}.

    }
\section{Method}\label{section:method}

        \noindent In this section we present outline of an
        algorithm for computing all Nash equilibria of IPIE games.
        The algorithm has two stages: a sample
        solution computation and the group action.
        Various methods for computing a sample
        solution are presented in \cite{MM96}.
        In this work we use a version of Multivariate Newton
        Raphson method(MVNRM).\\

        \noindent Over all approach of the algorithm is as follows:
        In the first stage compute a sample solution of the $\mathcal{GS}$.
        Once the sample solution is available, apply group action by Galois
        group to produce conjugate solutions of the sample solution. Rejecting
        all non-equilibria solutions from the set of solutions gives all the equilibria
        of an IPIE game.\\

        \noindent Note that the input to the algorithm is an IPIE game, and
         Nash \cite{jfn50} gives existence of at least one mixed strategy
         equilibrium(irrational equilibrium in our case). This implies that
         for each indeterminate variable, we are guaranteed to get a polynomial irreducible over
        the base ring $\mathbb Z$.\\

        \noindent In the first stage, MVNRM starts with a guessed solution of the $\mathcal{GS}$. For
        the class of games all the mixed strategy Nash equilibria (probability tuples)
        form a subset of the set of solutions of $\mathcal{GS}$. This allows
        choosing an initial guess of a solution to be either all 0's or
        1's or some value between $(0,1)$. The choice of a solution tuple speeds up the convergence rate of
        MVNRM. Next we convert the approximate solution of $\mathcal{GS}$ into algebraic form.\\

        \noindent For constructing minimal polynomials of approximate roots,
        output of MVNRM is fed as input to the KLL
        algorithm \cite{kll84}. The KLL algorithm requires
        at least $O(d^2+d\cdot\log H)$ bits of an approximate root for
        computing its minimal polynomial, where $d$ is the degree bound
        and $H$ is magnitude bound of the coefficients of the minimal polynomial.
        If we let $(x_{k1},x_{k2},\ldots,x_{kn})$ denote the
        approximate solution generated by MVNRM in $k^{th}$
        iteration, and $x_k =
        \sqrt{x_{k1}^2+x_{k2}^2+\ldots+x_{kn}^2}$,
        then the bound on number of bits required by the KLL algorithm sets the stopping criteria
        for MVNRM.

        \begin{proposition}\label{result: MVNRM stopping criteria}
            MVNRM must compute approximate solution
            of $\mathcal{GS}$ till the number of zero bits in the
            binary representation of $| x_{k+1} - x_k |$ is bounded above
            by $O(d^2+d\cdot\log H)$.
        \end{proposition}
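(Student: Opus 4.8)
The plan is to derive the stated bound by coupling the known input requirement of the KLL algorithm with the convergence behaviour of MVNRM on $\mathcal{GS}$. As recalled above, KLL needs at least $O(d^2 + d\cdot\log H)$ bits of an approximate root before it can reconstruct the corresponding minimal polynomial; this fixes the target precision $B = O(d^2 + d\cdot\log H)$ that MVNRM must deliver. First I would record that, by Harsanyi's result, the equilibria of an IPIE game are isolated, so each solution of \eqref{game_polynomial} is a nonsingular zero of the game system. Consequently the Jacobian of $\mathcal{GS}$ is invertible at the target solution, which is exactly the hypothesis under which the multivariate Newton--Raphson iteration converges quadratically once the initial guess lies in a sufficiently small neighbourhood.

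Second, I would turn the quadratic convergence into a statement about bits. Writing $e_k = x_k - x^{*}$ for the error of the scalar tracking quantity $x_k = \sqrt{x_{k1}^2 + \cdots + x_{kn}^2}$, quadratic convergence gives $|e_{k+1}| \leq C|e_k|^2$, so the number of correct binary digits of $x_k$ roughly doubles per iteration and, in particular, $|e_{k+1}| \ll |e_k|$. This lets me use the computable increment $|x_{k+1} - x_k|$ as a faithful surrogate for $|e_k|$, since $|x_{k+1}-x_k| = |e_k - e_{k+1}|$ and the second term is negligible. The key observation is then elementary: if a positive number $t$ satisfies $t \leq 2^{-B}$, its binary expansion begins with at least $B$ zero bits after the point, and conversely the count of such leading zero bits of $|x_{k+1}-x_k|$ equals the number of high-order bits of $x_k$ that have already stabilised, i.e.\ agree with $x^{*}$.

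Third, I would assemble the stopping rule. MVNRM should be iterated exactly until the increment $|x_{k+1}-x_k|$ first displays $B$ leading zero bits; at that point $x_k$ carries the $B = O(d^2 + d\cdot\log H)$ correct bits that KLL consumes, and by quadratic convergence this threshold is reached after only $O(\log B)$ iterations, so the zero-bit count needed never has to exceed $O(d^2 + d\cdot\log H)$. Continuing beyond this threshold is wasted work, which is precisely the content of the proposition.

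The main obstacle I anticipate is the passage from the single scalar $x_k$ to the per-coordinate precision that KLL actually requires: the Euclidean norm could in principle mask a coordinate that is still inaccurate, and binary subtraction can produce cancellation or carries that distort the naive zero-bit count. I would handle this by bounding each coordinate error by $|e_k|$ up to a fixed factor, so that precision of the norm forces precision of every coordinate, and by absorbing the $O(1)$ loss from cancellation into the asymptotic notation, leaving the order of the bound unchanged.
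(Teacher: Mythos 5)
Your opening paragraph is already the paper's entire proof: the authors merely observe that the KLL algorithm requires $O(d^2+d\cdot\log H)$ bits of an approximate root, and that this requirement ``immediately sets the criteria'' for the precision MVNRM must deliver --- exactly your coupling of the two algorithms. The additional machinery you supply (quadratic convergence of Newton's method, the increment $|x_{k+1}-x_k|$ as a surrogate for the true error, and the caveat that the scalar norm $x_k$ could mask an inaccurate coordinate) goes beyond anything the paper attempts, which silently assumes all of it, so your proposal is correct and takes essentially the same approach, only in greater detail.
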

        \begin{proof}
           MVNRM computes approximate roots of $\mathcal{GS}$, which form the input to the KLL
            algorithm. Required precision, $(d^2+d\cdot\log H)$, for the input to KLL algorithm
            then immediately sets the criteria for the amount of precision required by
            MVNRM.\hfill\qed
        \end{proof}

        \noindent Bounds on the value of $d$ and $H$ can be obtained as follows.
        The variety of polynomials in a Gr\"{o}bner basis and the polynomial system
        are same (Hilbert's Nullstalensatz). By change in a monomial order we can obtain a univariate
        polynomial in the desired indeterminate variable. It is easy to see that the bound on number of solutions
        of the system of polynomial equations also bounds the degree of each these univariate polynomials.\\

        \noindent Each polynomial in a Gr\"{o}bner basis of a polynomial system
        is essentially an S-polynomial. An S-polynomial is constructed by taking
        modulus of all the polynomials with a pair of
        distinct polynomials, for every pair of polynomials in the basis. Maximum value
        of coefficients in the $\mathcal{GS}$ then bounds the
        coefficient values of
        univariate polynomials in a Gr\"{o}bner basis of the $\mathcal{GS}$.
        A univariate polynomial in each indeterminate variable
        has a minimal polynomial as its one of the factors. And so the maximum magnitude
        of the coefficient in the minimal polynomial is bounded above by
        the maximum coefficient value of the game system $\mathcal{GS}$.\\

       \noindent At the end of the first stage,
        the algorithm generates a sample solution of the $\mathcal{GS}$.\\

        \noindent With the samples solution, available either in
        algebraic form or in numerical form, in the next stage of the
        algorithm, we apply group actions by Galois groups $G$.
        These groups are associated with the irreducible
        univariate polynomials in the ideal $\mathcal{I}$ of
        $\mathcal{GS}$. Note that, we assume Galois groups are known.
        For IPIE games the Galois groups are associated with ring extensions over
        $\mathbb{Z}$, and they generate conjugate solutions of the
        sample solution of the $\mathcal{GS}$.
            The group action is transitive and produces a single orbit
            for each indeterminant variable. Using all the orbits we can
            determine all the irrational solutions of the
            $\mathcal{GS}$.\\

        \noindent Recall that all the
        solutions of the $\mathcal{GS}$ need not be Nash equilibria. For rejecting
        unwanted non-equilibrium solutions, we apply the Nash equilibrium
        verification algorithm.
       \numberwithin{algorithm}{section}
            \begin{algorithm}[H]
            \caption{Phase 1: Computing a sample solution of $\mathcal{GS}$ with MVNRM and KLL Algorithm}
            \begin{algorithmic}[1]\label{algorithm: integer game}
                \STATE $X' = \{\}$. // Initialize an empty set to store a sample solution
                of the $\mathcal{GS}$.
                \STATE Characterize all the Nash Equilibria of the input game as solutions
                 to the system of form $\mathcal{GS}$.
                \WHILE{ one sample solution of polynomial system $\mathcal{GS}$ is not constructed}
                    \STATE Apply MVNRM with a starting solution tuple $x_0$ consisting entirely of
                    zeros.
                    \WHILE {inequality in Proposition \ref{result: MVNRM stopping
                    criteria} holds true.}
                        \STATE Compute approximate solution of $\mathcal{GS}$.
                    \ENDWHILE
                        \STATE Apply KLL Algorithm and compute
                        minimal polynomial.
                        \IF {minimal polynomials of some of the indeterminate variables in the approximate solution tuple is reducible over $\mathbb{Q}$}
                            \STATE Divide $\mathcal{GS}$ with those factors and go to step Step 4 with the updated game system.
                        \ELSE
                            \STATE Save the solution tuple in $X'$.
                        \ENDIF
                \ENDWHILE\\
            \end{algorithmic}
            \end{algorithm}

                \noindent The following algorithm
                computes transitive Galois group action of every
                indeterminate on its corresponding root in
                $X'$. It then generates Galois-conjugates of roots and saves them in solution tuple at
                appropriate position.

        \numberwithin{algorithm}{section}
             \begin{algorithm}[h]
            \caption{Phase 2: Computing polynomial time Galois group action on the
            sample solution in $X'$ to generate solutions of
            $\mathcal{GS}$.}
            \begin{algorithmic}[1]\label{algorithm: group action}
                \STATE Initialize processed-element list $X$ and unprocessed-elements list $U$ as $X = U = X'$.
                \WHILE {$U$ is not empty}
                    \STATE Let $u$ be the first element of $U$. Delete $u$ from $U$.
                    \FOR { each element $g$ in Galois group $G$}
                        \STATE Compute transitive Galois group action $u ^ g$.
                        \IF {$u ^ g \notin X$ }
                            \STATE $X = X \cup \{u ^ g\}$  and $U = U \cup \{u ^ g\}$.
                        \ENDIF
                    \ENDFOR
                \ENDWHILE
            \end{algorithmic}
            \end{algorithm}

            \noindent For the division by linear factor in Step 10,
            of Algorithm \ref{algorithm: integer game} we use multivariate polynomial division algorithm
            from \cite{GG03}.
       It is important to note that due to the existence of a mixed Nash equilibrium
        and the fact that all equilibria are irrational for the input game, we are guaranteed to get one
        solution of $\mathcal{GS}$ in $X'$ and so Algorithm \ref{algorithm: integer game} reaches Step 14 every time.
        Moreover, the finiteness of the group and the variety on which it
        acts guarantees
       that the Algorithm \ref{algorithm: group action} reaches Step 10.
        At the end of Step 10, Algorithm \ref{algorithm: group action} generates
        solutions of the system
        $\mathcal{GS}$ in $X$, all of which may not be Nash equilibria.
        At last, for rejecting non-equilibrium solutions, the polynomial time
        Nash equilibrium verification algorithm
        \cite{rg05} is used.\\

        \noindent Note that the method above
        computes solutions to a system of polynomial equation using its sample solution
        and its Galois group. After the computation of a sample solution, all other
        solutions computed are without factorization of the system of polynomial equations.

        \section{Results}\label{section:resultALLEqui}
       \noindent Observe that Step 9 in the Algorithm \ref{algorithm: integer game} rejects not only
        integer roots but rational roots. This forces
        the polynomial system not to consider rational extensions over
        the ring of integers. Such rejections can be justified
        with the following result.

        \begin{proposition}\label{prop_intgame_ratequi}
            Algorithm \ref{algorithm: integer game} and Algorithm \ref{algorithm: group action}
            cannot be used for computing all Nash equilibria of games
            with integral payoffs and rational equilibria.
        \end{proposition}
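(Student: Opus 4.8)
\noindent The plan is to show that the sole mechanism by which Algorithms \ref{algorithm: integer game} and \ref{algorithm: group action} pass from one sample solution to the full solution set of $\mathcal{GS}$ is the transitive action of a Galois group on the roots of the univariate minimal polynomials, and that this mechanism collapses to the identity when the equilibrium coordinates are rational. First I would fix a game with integral payoffs all of whose equilibria are rational, and let a candidate equilibrium have coordinates $x^i_{j_i}=a/b\in\mathbb{Q}$.

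\noindent I would then proceed in the following steps. (i) For such a coordinate $q=a/b$, its minimal polynomial over $\mathbb{Z}$ is the linear polynomial $b x-a$, which already splits over $\mathbb{Q}$; the corresponding ring extension is trivial and its Galois group is $G=\{1\}$. (ii) By the orbit definition recalled above, $Gq=\{\sigma q:\sigma\in G\}=\{q\}$ is a singleton, and transitivity is vacuous because the root set is a single point. (iii) Hence in Algorithm \ref{algorithm: group action} the loop over $g\in G$ applies only the identity, so $u^{g}=u$ for every seed $u$ and the processed list $X$ never grows beyond the initial $X'$; if the game has two or more distinct rational equilibria, none but the seed can ever enter $X$, so not all equilibria are produced. (iv) Independently, a rational coordinate contributes a linear factor to the univariate polynomial in $\mathcal{I}$ obtained by the change of monomial order described above; as $\mathcal{GS}$ carries additional solutions this renders the polynomial reducible over $\mathbb{Q}$, the test at Step 9 of Algorithm \ref{algorithm: integer game} fires, and Step 10 divides the factor out, so a rational sample is never even retained in $X'$. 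Either (iii) or (iv) alone yields the negative conclusion.

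\noindent The step I expect to be the main obstacle is reconciling the wording of Step 9 with the elementary fact that $x-q$ is itself irreducible over $\mathbb{Q}$: the reducibility that actually triggers the rejection is that of the univariate elimination polynomial in $\mathcal{I}$, not of a single-root minimal polynomial taken in isolation. I would therefore argue carefully through the univariate polynomial extracted from a Gr\"obner basis, whose linear factors are exactly the rational roots, so that Step 10 legitimately strips them away while leaving the irrational (degree $\geq 2$) factors intact. The orbit argument of step (iii) then supplies robustness: even in the hypothetical situation where a rational seed survived into $X'$, the triviality of $G$ forbids the group action from reaching a second rational solution, which is precisely the statement of the proposition.
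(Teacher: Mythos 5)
Your proposal is correct and takes essentially the same route as the paper: the heart of both arguments is that any $\mathbb{Z}$-fixing automorphism $\sigma$ of $\mathbb{Z}(a/b)$ must satisfy $\sigma(a/b)=a/b$, so the Galois group is trivial and the group-action phase (your steps (ii)--(iii)) can never generate conjugate solutions, while your step (iv) merely restates the paper's own remark about Step 9 preceding the proposition. One small imprecision to fix: $\mathbb{Z}(a/b)$ is generally a \emph{proper} ring extension of $\mathbb{Z}$ (e.g.\ $\mathbb{Z}(1/2)\neq\mathbb{Z}$), so it is not the extension that is trivial but its automorphism group over $\mathbb{Z}$, which is exactly what the paper computes.
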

        \begin{proof}

            \noindent Let $T$ be a game with integer payoffs and one
            or more rational equilibria of form $a/b$, where $a, b \neq 0 \in \mathbb{Z}$.
            This forces an extension $S = \mathbb{Z}(a/b)$ over
            $\mathbb{Z}$. The group $G$ of automorphisms of $S$ which fix
            $\mathbb{Z}$ can be computed as follows.\\

            \noindent Let $c,d \in \mathbb{Z}$, for any $c+ (a/b)d \in S$ and for any $\sigma \neq id \in G$,
            \begin{eqnarray*}
                \sigma(c+ (a/b)d) &=& \sigma(c)+\sigma(a/b)\sigma(d)\\
                                  &=&    c     +\sigma(a/b) d,
            \end{eqnarray*}
            and $\sigma(\frac{a}{b}\cdot b) = a \Rightarrow \sigma(\frac{a}{b})\sigma(b) = a \Rightarrow \sigma(\frac{a}{b}) =
            a/b \Rightarrow \sigma =$ identity.

            \noindent This means, the group of automorphisms of rational extensions of the ring of integers turns out to
            be a trivial identity group. And so, the group doesn't
            provide necessary information for producing conjugate solutions of the $\mathcal{GS}$.\hfill\qed
        \end{proof}
        \noindent Now, in order to prove the validity of the proposed method for IPIE
        games, we establish 3 more results, as follows:
       \begin{proposition}
            For any IPIE game its Galois group is non-trivial.
        \end{proposition}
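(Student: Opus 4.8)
The plan is to read this statement as the exact converse of Proposition \ref{prop_intgame_ratequi}: there a rational equilibrium coordinate was shown to force the extension over $\mathbb{Z}$ to collapse to the identity automorphism, whereas here an irrational coordinate should force a genuine algebraic extension whose Galois group cannot reduce to $\{1\}$. I would therefore reuse the same ``fix a coordinate and examine its extension over $\mathbb{Z}$'' framework, but run the degree argument in the opposite direction.

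First I would fix an IPIE game $T$. By definition its payoffs $A^i_{j_1\ldots j_n}$ are integers, so the game system $\mathcal{GS}$ of \eqref{game_polynomial} has coefficients in $\mathbb{Z}$. By Nash's existence theorem \cite{jfn50}, $T$ admits at least one mixed equilibrium, and since $T$ is an IPIE game this equilibrium is irrational; hence $\mathcal{GS}$ has at least one solution all of whose coordinates $x^i_{j_i}$ are irrational. Next, as in the Method section, I would pass to the univariate polynomial $p(x)\in\mathbb{Z}[x]$ associated with a fixed indeterminate $x^i_{j_i}$, obtained by elimination from a Gr\"{o}bner basis of the ideal $\mathcal{I}$ and retaining the irreducible factor over $\mathbb{Z}$ (the factor kept by Step 9 of Algorithm \ref{algorithm: integer game}). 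The chosen irrational coordinate is a root of this $p(x)$; because it is irrational, $p(x)$ cannot be linear, since a degree-one polynomial in $\mathbb{Z}[x]$ has only a rational root. Therefore $\deg p \geq 2$.

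The final step invokes the fact, recalled in the excerpt, that the Galois group $G$ of the extension $S=\mathbb{Z}(\alpha)$ acts as a permutation group on the roots of $p(x)$, and moreover acts transitively on its Galois-orbit. Since $\mathbb{Z}$ (equivalently $\mathbb{Q}$) has characteristic zero, $p(x)$ is separable, so it has $\deg p \geq 2$ distinct roots. A group acting transitively on a set with at least two elements must contain a non-identity element, whence $G \neq \{1\}$, i.e. the Galois group is non-trivial.

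The main obstacle I anticipate is not the transitivity-plus-counting argument at the end, which is routine, but making precise the passage from the multivariate system $\mathcal{GS}$ to a single univariate irreducible polynomial over $\mathbb{Z}$ and verifying that the irrational equilibrium coordinate really is one of its roots. In particular I would need to confirm that the elimination step yields a polynomial whose irreducible factor over $\mathbb{Z}$ both vanishes at the equilibrium coordinate and is separable over the base ring. Once that identification is secured, non-triviality of $G$ is immediate from $\deg p \geq 2$ together with the asserted transitive action.
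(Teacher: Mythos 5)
Your proof is sound within the paper's standing assumptions, but it takes a genuinely different route from the paper's own argument. The paper does not go through $\mathcal{GS}$, elimination, or root counting at all: it first verifies directly that an irrational extension is a Galois extension in the ring-theoretic sense of Definition~\ref{def_gal_ext_ring}, checking the maximal-ideal condition for $p\mathbb{Z}(\alpha)$ with a suitably chosen prime $p$ and element $s$ coprime to $p$ --- but only for extensions of the special surd form $\alpha=\sqrt[m]{n}$; it then argues by contraposition that a trivial Galois group would force the minimal polynomial of $\alpha$ to split into linear factors over $\mathbb{Z}$, giving $\alpha\in\mathbb{Z}$, contradicting irrationality. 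Your forward argument (irrational coordinate $\Rightarrow$ irreducible $p$ with $\deg p\geq 2$ $\Rightarrow$ separability in characteristic zero gives two distinct roots $\Rightarrow$ transitivity forces a non-identity element) is cleaner and strictly more general, since it applies to arbitrary algebraic irrational equilibrium coordinates rather than only surds --- a real advantage, because nothing guarantees that equilibrium coordinates of an IPIE game are radicals, so the paper's restriction to $\sqrt[m]{n}$ is a gap your version avoids. What you give up is the paper's (attempted) verification that the extension actually satisfies Definition~\ref{def_gal_ext_ring}: you take the transitive permutation action on the roots as given, which is safe for the Galois group of the splitting field of $p$ over $\mathbb{Q}$, but would fail if ``the Galois group'' meant the automorphism group of the simple extension $\mathbb{Z}(\alpha)$ itself --- for a non-normal case such as $\alpha=\sqrt[3]{2}$ that automorphism group is trivial even though $\deg p=3$. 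Since the paper itself asserts transitivity for the groups attached to the irreducible univariate polynomials in $\mathcal{I}$ (Section~\ref{section:method}), your reliance on it is consistent with the ambient framework, and you correctly flagged the elimination-and-separability step as the point needing care.
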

        \begin{proof}
            Any IPIE game, by definition, produce irrational ring extensions
            over the ring of integers $\mathbb{Z}$.
            We first show that each such
            extension produces a Galois extension.
            Let $\alpha =
            \sqrt[m]{n}$, for $m,n \geq 2$, be an
            irrational number. Let $p \neq n^m$ be a positive prime integer.\footnote{It is easy to see that if $p= n^m$ then
            any element $\beta \in p\mathbb{Z}(\alpha)$ is not a prime element.}
            Then it can be verified that $p\mathbb{Z}(\alpha)$ forms
            a prime and thus a maximal ideal of
            $\mathbb{Z}$.
            We now choose a $s \in
            \mathbb{Z}(\alpha)$ such that its co-prime to $p$.
            Following Definition \ref{def_gal_ext_ring}, if we let $S=\mathbb{Z}(\alpha)$
            and $R=\mathbb{Z}$, then for any $\rho \in G(S/R)\backslash\{1\}$ and for chosen $s$,
            $\rho s - s \notin p\mathbb{Z}(\alpha)$. This shows that
            the extension is indeed a Galois extension.\\

            \noindent Next, suppose the Galois group for
            the irrational extension $\mathbb{Z}(\alpha)$ is trivial.
            Then the minimal polynomial of $\sqrt[m]{n}$ has all
            factors linear over $\mathbb{Z}$, and hence $\alpha \in
            \mathbb{Z}$. This is impossible for IPIE games. And so
            the result follows.\hfill\qed
        \end{proof}
        \noindent The next result sets the criteria for the MVNRM to
        converge to a solution of a $\mathcal{GS}$.
        \begin{proposition}\label{result: convergence_criteria}
            Let $x=(x_1,x_2,\ldots,x_n)$  be the vector strategy tuple with each $x_i$
            denoting a pure strategy for the players and let
            $f(x)=(f_1(x),f_2(x),\ldots,f_n(x))$, for all polynomials $f_i \in
            \mathcal{GS}$. Then MVNRM converges to a sample solution of $\mathcal{GS}$ if
            the following condition holds: $|f(x)\ J^2(f(x))| < |J(f(x))^2|$.
        \end{proposition}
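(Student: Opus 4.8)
The plan is to realize MVNRM as a fixed-point iteration and then invoke the contraction-mapping principle, reducing the stated inequality to the classical condition under which the Newton iteration map is locally contracting. Writing the method in the multivariate form $x^{(k+1)} = g(x^{(k)})$ with iteration map
\[
    g(x) = x - J(f(x))^{-1}\, f(x),
\]
I would first observe that a fixed point $x^*$ of $g$ satisfies $J(f(x^*))^{-1} f(x^*) = 0$, and since $J(f(x^*))$ is assumed nonsingular at a regular sample solution, this forces $f(x^*) = 0$. Hence the fixed points of $g$ are exactly the solutions of $\mathcal{GS}$ at which the Jacobian is invertible, and it suffices to show that $g$ is a contraction on a neighborhood of such a point, after which convergence to the sample solution follows from the Banach fixed-point theorem.

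The next step is to compute the derivative $Dg(x)$ of the iteration map and read off the contraction condition $\|Dg(x)\| < 1$. Differentiating the product $J(f(x))^{-1} f(x)$, using $Df = J(f(x))$ together with the derivative-of-inverse identity $D[J^{-1}] = -J^{-1}(DJ)J^{-1}$, the two terms equal to the identity cancel and one is left with
\[
    Dg(x) = J(f(x))^{-1}\, J^2(f(x))\, J(f(x))^{-1}\, f(x),
\]
where $J^2(f(x))$ denotes the second-derivative (Hessian) tensor of $f$. The requirement $\|Dg(x)\| < 1$ is then the multivariate lift of the familiar scalar Newton criterion, obtained in one variable from $g'(x) = f(x)f''(x)/f'(x)^2$, whose magnitude is below $1$ precisely when $|f f''| < |f'|^2$. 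Clearing the two factors $J(f(x))^{-1}$ from $Dg(x)$ transcribes this bound into $|f(x)\, J^2(f(x))| < |J(f(x))^2|$, which is exactly the inequality in the statement (with $|\cdot|$ read as the relevant operator norm).

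The main obstacle is to make the tensor and matrix bookkeeping precise, since the scalar-style inequality hides genuine subtleties of the multivariate case: $J^2(f(x))$ is a third-order object, and \emph{clearing} the factors $J(f(x))^{-1}$ is not an identity but rather the estimate $\|J^{-1} J^2 J^{-1} f\| \le \|J^{-1}\|^2 \|J^2\|\,\|f\|$, which is exact in one variable only because absolute values are multiplicative. Carrying this out rigorously requires committing to compatible submultiplicative norms so that the stated condition is both valid and tight enough to guarantee $\|Dg(x)\| < 1$, and it requires confirming — by continuity, for a regular sample solution — that $J(f(x))$ stays nonsingular throughout the neighborhood so that $g$ and $Dg$ are well defined there. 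With those points settled, the contraction-mapping argument closes the proof.
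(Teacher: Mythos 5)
Your proposal follows essentially the same route as the paper's own proof: the paper defines the Newton iteration map $\phi(x) = x - \frac{f(x)}{Jf(x)}$, imposes the fixed-point contraction condition $\left|\frac{d}{dx}\phi(x)\right| < 1$, and differentiates to obtain the stated inequality, which is precisely your contraction-mapping argument with $g = \phi$. If anything, your treatment is more careful than the paper's, which carries out the computation in purely scalar notation and never confronts the tensor and norm subtleties you correctly flag as the genuine multivariate obstacle (in particular, that clearing the $J^{-1}$ factors is only an estimate, and $\|J^{-1}\|^{-2} \neq \|J\|^{2}$ in general).
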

        \begin{proof}
            In MVNRM, an approximation of the $n^{th}$ strategy tuple $x_n$ is
            computed using
            \[
                x_n = x_{n-1} - \frac{f(x_{n-1})}{Jf(x_{n-1})}.
            \]
            If we let
            \begin{equation} \label{eqn: convergence_phi}
                \phi(x)= x - \frac{f(x)}{Jf(x)}
            \end{equation}
            then for overall convergence of MVNRM we need
            $|\frac{d}{dx}\phi(x)|<1$. Taking the derivative of
            \eqref{eqn: convergence_phi} and
            simplifying it, we get $|f(x)\ J^2(f(x))| < |J(f(x))^2|$.
            With this condition, MVNRM converges to a sample
            solution of the $\mathcal{GS}$.\hfill\qed
        \end{proof}
        With the required tools in hand, we can now show the correctness of
        the method for computing all Nash equilibria of IPIE games.
        \begin{proposition}
            Algorithms \ref{algorithm: integer game} and \ref{algorithm: group action}
             for computing all equilibria of IPIE games works. i.e. these algorithms generate all irrational
            equilibria of IPIE game at termination - and they do not generate any spurious solutions.
        \end{proposition}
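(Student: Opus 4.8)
The plan is to split the claim into three parts --- termination, soundness (no spurious solutions), and completeness (every irrational equilibrium appears) --- and to check each against the two phases in turn. First I would dispatch the preconditions for Phase~1. By Nash's existence theorem together with the IPIE hypothesis, the input game has at least one totally mixed irrational equilibrium, which is a solution of $\mathcal{GS}$; Proposition~\ref{result: convergence_criteria} supplies the condition under which MVNRM converges to it, and Proposition~\ref{result: MVNRM stopping criteria} fixes the precision at which the KLL step reconstructs an exact minimal polynomial. Step~9 of Algorithm~\ref{algorithm: integer game} then discards any coordinate whose minimal polynomial factors over $\mathbb{Q}$, so the tuple stored in $X'$ is a genuine, fully irrational solution of $\mathcal{GS}$ whose coordinates are roots of polynomials irreducible over $\mathbb{Q}$. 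This is exactly the hypothesis needed to run Phase~2.

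For soundness I would isolate the single algebraic fact that makes the group action legitimate: since every payoff $A^i_{j_1\ldots j_n}\in\mathbb{Z}$, each polynomial $f_i\in\mathcal{GS}$ has coefficients in $\mathbb{Z}$, which every $\sigma\in G$ fixes. Hence, applying $\sigma$ to the identity $f_i(\alpha_1,\ldots,\alpha_n)=0$ yields $f_i(\sigma\alpha_1,\ldots,\sigma\alpha_n)=0$, so each conjugate $u^g$ computed in Algorithm~\ref{algorithm: group action} is again a solution of $\mathcal{GS}$. Because a solution of $\mathcal{GS}$ need not be an equilibrium, I would then invoke the exact Nash equilibrium verification step to filter $X$ and conclude that every tuple surviving verification is a true equilibrium, so no spurious equilibrium is ever returned.

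Completeness is where the real content lies. I would argue that the set of irrational totally mixed solutions of $\mathcal{GS}$ forms a single Galois orbit of the sample solution. The ingredients are: Harsanyi's result that these equilibria are finite in number and isolated; the irreducibility over $\mathbb{Q}$ secured by Step~9; and the transitivity of the Galois group on the roots of an irreducible univariate polynomial, which --- by the remark that a transitive action has a single orbit equal to the whole set --- forces the orbit of the sample coordinate to sweep out every conjugate root. Lifting this coordinate-wise transitivity to transitivity on the full solution tuples would give that the $G$-orbit computed in Algorithm~\ref{algorithm: group action} equals the entire irrational solution set of $\mathcal{GS}$; since every irrational equilibrium lies in that set, each one appears in $X$ and survives verification.

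Finally I would settle termination and flag the main obstacle. Phase~1 terminates because the guaranteed existence of an irrational solution ensures Algorithm~\ref{algorithm: integer game} reaches Step~14, and Phase~2 terminates because $G$ is finite and the variety it acts on is finite, so $U$ can only ever hold finitely many distinct tuples. The hard part will be the completeness step: justifying that the single $G$-orbit of the sample solution coincides with the \emph{full} set of irrational solutions rather than a proper subset. This requires that the relevant component of the ideal $\mathcal{I}$ be irreducible over $\mathbb{Q}$ and that independently transitive actions on each coordinate's roots combine into a transitive action on tuples; absent a genuinely irreducible zero-dimensional scheme whose points are all conjugate, the coordinate-wise orbit could either overshoot (producing non-solutions that verification must discard) or, more dangerously, the sample orbit could miss equilibria lying in a distinct conjugacy class. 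Closing this gap carefully is the crux of the argument.
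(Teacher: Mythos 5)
Your route coincides with the paper's own proof almost step for step: Phase~1 is justified exactly as the paper does it (Nash existence plus the IPIE hypothesis, the convergence condition of Proposition~\ref{result: convergence_criteria}, the stopping criterion of Proposition~\ref{result: MVNRM stopping criteria}, and the Step~9 rejection of rational coordinates to secure a genuinely irrational sample solution), and Phase~2 likewise (transitive Galois action on the roots of the irreducible minimal polynomials, finiteness of $G$ and of the variety for termination, and the verification algorithm of \cite{rg05} as the final filter). You do go beyond the paper in one place: your soundness lemma --- every $\sigma\in G$ fixes the integral coefficients of each $f_i\in\mathcal{GS}$, hence $f_i(\alpha_1,\ldots,\alpha_n)=0$ implies $f_i(\sigma\alpha_1,\ldots,\sigma\alpha_n)=0$ --- is stated nowhere in the paper, which simply asserts that the group ``generates conjugate solutions.'' Making it explicit is a real improvement, but note it is valid only for the \emph{diagonal} action of a single automorphism of a common extension containing all coordinates; Algorithm~\ref{algorithm: group action} as described acts coordinate-wise on each root's own minimal polynomial and reassembles tuples, a situation to which your identity does not directly apply.

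The completeness gap you flag at the end is genuine, and you should know that the paper's proof does not close it either: it asserts that the transitive action ``has only one orbit'' and that the conjugates of the sample solution contain all Nash equilibria, which is precisely the unproved step. Coordinate-wise transitivity on each root set does not lift for free: mixed tuples of conjugates need not solve $\mathcal{GS}$ (your overshoot --- in the paper's own appendix example, flipping the sign of $\sqrt{19}$ in one coordinate of \eqref{example_solution} but not the others yields a non-solution), and if the zero set of $\mathcal{I}$ decomposes into several $\mathbb{Q}$-irreducible components, the orbit of the sample solution misses every equilibrium on the other components (your miss). The hypothesis that would repair this is in fact available inside the paper itself: under the Shape Lemma conditions of Proposition~\ref{result:membership lemma} (zero-dimensional, radical, in general position), solutions are parametrized as $(g_1(\theta),\ldots,g_{n-1}(\theta),\theta)$ by the roots $\theta$ of a single irreducible univariate polynomial, and transitivity on those roots \emph{does} lift to transitivity on whole solution tuples. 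Neither your argument nor the paper's imports such a hypothesis into this proposition, so the completeness claim remains unproven in both; your attempt has the merit of identifying this correctly rather than asserting it away.
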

        \begin{proof}
            The input to the Algorithm \ref{algorithm: integer game} is an IPIE game $T$ with $n$ players.
            All the Nash equilibria of this game are characterized by a polynomial system $\mathcal{GS}$
            of the form \eqref{game_polynomial}.
            The polynomial system comes from the inequalities
            on expected payoffs and payoffs at pure
            strategies. These inequalities cause the system to have
            more solutions then just equilibria.\\

            \noindent Algorithm \ref{algorithm: integer game} computes a sample solution
            of the $\mathcal{GS}$ using MVNRM and saves it in $X'$. MVNRM computes an approximate solution of
            the game system $\mathcal{GS}$, starting with an approximate solution value either 0,1 or $\frac{1}{2}$.
            Nash \cite{jfn51} guarantees that a solution of $\mathcal{GS}$ exists in $(0,1)$.
            With the convergence criteria from Proposition \ref{result:
            convergence_criteria}, MVNRM converges to a solution
            tuple of $\mathcal{GS}$. This may not be a sample
            solution.
            It is known that the input game has only irrational equilibria.
            If there are integer or rational solutions, they are rejected with
            the condition in Step 9 of the Algorithm \ref{algorithm: integer game}. This guarantees a sample
            solution of $\mathcal{GS}$.
            The polynomial bound given in Proposition \ref{result: MVNRM stopping criteria},
            on the amount of precision required for KLL algorithm, forces MVNRM to
            stop. KLL algorithm computes the minimal polynomial for each root in the sample
            solution and terminates. \\

            \noindent Roots in the sample solution extend the ring of
            integers $\mathbb{Z}$ to some Galois extension $S$ of
            it. This is due to the Galois correspondence established by
            Chase et al. \cite{chr65}. More so, irreducible polynomials of univariate
            polynomials in ideal $\mathcal{I}$ of $\mathcal{GS}$ forces transitivity of
            Galois groups. This gives us
            a meaningful non-trivial transitive Galois group
            $G = Gal(S/\mathbb{Z})$.\\

            \noindent The transitive group action has only one orbit.
            Polynomials that we consider are irreducible after elimination of
            integer and rational roots.
            Due to our assumption that the Galois group
            of irreducible polynomials are known, we get
            group action of
            a transitive group $G=Gal(S/\mathbb{Z})$ on elements of
            set $X'$ to generate all the elements in set $X$.
            System \eqref{game_polynomial} has finitely many real solutions
            and so the group action terminates. This enables Algorithm
            \ref{algorithm: group action}
            to reach Step 10. And so the algorithm
            generates all conjugate solutions of the sample solution containing all
            Nash equilibria of game $T$.\\

            \noindent The solutions of $\mathcal{GS}$ may be more than just the Nash equilibria.
            Unwanted solutions are further rejected by calling
            a polynomial time algorithm to verify Nash equilibrium.
            What remains are all irrational equilibria of the game $T$.
            And so the result follows.\hfill\qed
        \end{proof}

        \noindent
        With the algebra and algorithms discussed above, we further ask,
        whether we can consider finite normal form games with all irrational equilibria and
        all payoff values from either finite fields or some finite set.
        This question can be partially answered as follows:
        \begin{proposition}
            The algebra and algorithms for IPIE games cannot be extended to
            work over finite fields and their extensions.
        \end{proposition}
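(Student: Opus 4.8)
The plan is to follow the template of Proposition \ref{prop_intgame_ratequi} and isolate a single stage at which the construction provably breaks down, rather than arguing vaguely that the finite-field setting is ``harder.'' I would argue that the IPIE method rests on two pillars, both of which presuppose structure that a finite field and its extensions lack: an analytic first stage that computes the sample solution, and an order-theoretic formulation of the game system $\mathcal{GS}$. Showing that either pillar collapses over $\mathbb{F}_q$ suffices, and I would present both since they reinforce each other.

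First I would attack the sample-solution stage, since that is most directly ``the algorithm.'' Algorithm \ref{algorithm: integer game} is analytic throughout: the convergence criterion of Proposition \ref{result: convergence_criteria} is stated in terms of the Jacobian $J(f(x))$ and a comparison of absolute values $|f(x)\,J^2(f(x))|<|J(f(x))^2|$, while the stopping criterion of Proposition \ref{result: MVNRM stopping criteria} counts the zero bits in the binary representation of the real quantity $|x_{k+1}-x_k|$. Derivatives, a real-valued absolute value, binary approximation, and convergence of a Cauchy sequence all rely on the Archimedean metric structure of $\mathbb{R}$, which $\mathbb{F}_q$ does not carry; in particular there is no notion of an ``approximate root'' for the subsequent KLL lattice-reduction step to refine. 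Since Algorithm \ref{algorithm: group action} takes the output $X'$ of Algorithm \ref{algorithm: integer game} as its input, the failure of the first stage blocks the entire pipeline, independently of any Galois-theoretic considerations.

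Second I would note that the very formation of $\mathcal{GS}$ requires an ordered field: the payoff conditions \eqref{geq_zero}, the normalization \eqref{sum_one}, and the Nash inequalities \eqref{nash_inequality} from which $\mathcal{GS}$ is derived all compare elements using $<$ and $\geq$. A finite field admits no compatible total order, so neither a probability distribution on strategies nor the payoff-maximizing inequality defining a Nash equilibrium is meaningful, and there is no well-posed ``IPIE game over $\mathbb{F}_q$'' to feed the algorithms. Relatedly, the rational/irrational dichotomy that drives the rejection in Step 9 of Algorithm \ref{algorithm: integer game} has no order-theoretic content over a finite field, since every element of $\overline{\mathbb{F}_q}$ is algebraic over the prime field.

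The main obstacle will be making the impossibility \emph{precise} rather than merely plausible, and in particular disentangling what genuinely fails from what merely changes. I would therefore be careful to point out that Galois theory over finite fields is itself perfectly well behaved -- every finite extension $\mathbb{F}_{q^n}/\mathbb{F}_q$ is Galois with cyclic group generated by the Frobenius automorphism, so the Galois-action step of Algorithm \ref{algorithm: group action} is not where the difficulty lies. The obstruction is specifically the analytic sample-solution stage together with the ordered-field formulation of $\mathcal{GS}$, and establishing this cleanly -- rather than attributing the failure to the Galois machinery -- is the delicate part of the argument.
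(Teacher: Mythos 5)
Your proposal is sound, and its second pillar is essentially the paper's own proof: the paper argues that a game over a finite field forces the payoff codomain into that field, that players need a total order on the codomain to compare strategies, that finite fields are not ordered fields, and that any order imposed on one conflicts with the field operations --- hence no meaningful game and no polynomial algebra of the kind used in Section \ref{section:method}. Your first pillar is genuinely additional: the paper never observes that the sample-solution stage (the convergence criterion of Proposition \ref{result: convergence_criteria}, the bit-counting stopping rule of Proposition \ref{result: MVNRM stopping criteria}, and the KLL step's need for an approximate real root) presupposes the Archimedean absolute value of $\mathbb{R}$ and therefore collapses over $\mathbb{F}_q$ before any Galois-theoretic question arises; this buys an impossibility argument that survives even if one were to stipulate some ad hoc order on payoffs, which the paper's argument alone does not rule out. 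Your explicit remark that the Galois machinery itself is \emph{not} the obstruction --- every extension $\mathbb{F}_{q^n}/\mathbb{F}_q$ being cyclic Galois under Frobenius --- is likewise a sharpening the paper omits, since the paper leaves the locus of failure at the vaguer ``we cannot perform polynomial algebra.'' One small slip: your claim that the rational/irrational dichotomy behind Step 9 of Algorithm \ref{algorithm: integer game} ``has no order-theoretic content'' over a finite field is misdirected, because that dichotomy is field-theoretic rather than order-theoretic even over $\mathbb{Q}$, and its natural analogue --- membership in the base field versus a proper algebraic extension --- is perfectly well defined over $\mathbb{F}_q$; nothing load-bearing in your argument rests on this remark, so the proof stands as a strictly stronger version of the paper's.
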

        \begin{proof}
            \noindent If we define a finite normal form game over some finite number field,
            then the only polynomial algebra that we can consider is
            congruent-modulo algebra. i.e. polynomial system of form
            \eqref{game_polynomial} will be modulo some prime or prime power.
            This forces the expected cost
            function codomain values to be restricted to the finite number field.
            The payoff functions in games must provide every player
            a choice over his strategies by suggesting an order between elements
            in the codomain, where the function maps strategies.
            It is known that, finite number fields are not
            ordered fields and so they fail to provide a total order amongst player
            strategies. Moreover, the available order over finite fields conflict with
            field operations and we cannot perform polynomial
            algebra. So, we cannot meaningfully define games, and
            consider polynomial algebra such as suggested in the
            Algorithms in Section \ref{section:method} for computing Nash equilibria of such games.\hfill\qed
        \end{proof}

            \noindent However, we can consider some discrete codomain of the payoff
            function
            such as the ring of integers with total order or some finite
            set with arbitrarily defined total order and consider games over
            it.\\

        \noindent Now, we consider a result that talks about computation of Nash equilibria in closed form.
        It is known that if a polynomial defined over fields has a solvable Galois group, then
        all its roots can be computed with radicals. If the result generalizes over rings then
        we can generalize the solvability by radical result.
                i.e. for some ring $S$ and a subring $R$
                the following holds:
                \begin{equation}
                    R = \mathbb{Z} = L_0 \subset L_1 \subset \ldots
                    \subset L_n = S,
                \end{equation}
                and $\exists\ \alpha_i \in L_{i+1}$, a
                natural number $n_i$, such that $L_{i+1} = L_i(a_i)$
                and $\alpha_i^{n_i} \in L_i$,
                then solvability by radicals can
                be extended for a subclass of IPIE games. All finite ring
                extensions need not be radical.
                With this
                restriction on the extension of the ring and the
                definition of Galois theory over rings, we have
                following result.
                \begin{proposition}\label{result:
                intgame_closedfrm_equi}
                    If the ring extensions associated with the IPIE
                    games are radical then all the equilibria of
                    IPIE games can be computed in closed form.
                \end{proposition}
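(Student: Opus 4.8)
The plan is to exploit the radical tower $\mathbb{Z} = L_0 \subset L_1 \subset \cdots \subset L_n = S$ supplied by the hypothesis and to show, by induction up the tower, that every element of $S$ --- in particular every coordinate of the sample solution of $\mathcal{GS}$ --- admits an expression in nested radicals over $\mathbb{Z}$. At the base, $L_1 = L_0(\alpha_0)$ with $\alpha_0^{n_0} \in L_0 = \mathbb{Z}$, so $\alpha_0 = \sqrt[n_0]{c_0}$ for some $c_0 \in \mathbb{Z}$ is already a radical. For the inductive step, $\alpha_i^{n_i} \in L_i$ and, by the inductive hypothesis, every element of $L_i$ is a closed-form radical expression; hence $\alpha_i = \sqrt[n_i]{\beta_i}$ with $\beta_i = \alpha_i^{n_i} \in L_i$ is a nested radical over $\mathbb{Z}$. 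Since $L_{i+1}$ is generated over $L_i$ by $\alpha_i$, every element of $S$ is a $\mathbb{Z}$-polynomial combination of products of powers of the $\alpha_i$, and so is expressible in radicals.

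First I would recall, from the correctness result for the method and the Galois correspondence of Chase et al.\ \cite{chr65}, that the roots appearing in the sample solution lie in the Galois extension $S = \mathbb{Z}(\alpha)$, so that the sample solution's coordinates are elements of $S$. By the inductive argument above, each such coordinate is then a closed-form radical expression, which yields one equilibrium in closed form. Next I would invoke Algorithm \ref{algorithm: group action}: the transitive group $G = \mathrm{Gal}(S/\mathbb{Z})$ acts as $\mathbb{Z}$-algebra automorphisms on $S$, so applying any $\sigma \in G$ to a closed-form coordinate permutes the $\alpha_i$ among their conjugates and returns another element of $S$, again a closed-form radical expression. Thus every conjugate solution produced by the group action inherits a closed form. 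Finally, the Nash-verification step rejects the non-equilibrium solutions, leaving exactly the irrational equilibria, each now written in radicals.

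The hard part will be justifying that the $\mathbb{Z}$-algebra automorphisms in $G$ send radicals to radicals in a manner that preserves closed-form expressibility, since over a ring the image $\sigma(\alpha_i)$ may differ from $\alpha_i$ by a root of unity that need not itself lie in $S$. The remedy is to observe that adjoining the relevant roots of unity is itself a radical (cyclotomic) operation, so passing to the radical closure does not leave the realm of closed-form expressions; what must be checked is that the ring-theoretic Galois action of Chase et al.\ remains compatible with this passage and that no coordinate is forced outside the radical tower. Once this compatibility is established, the three ingredients --- the inductive radical expression for elements of $S$, the invariance of closed form under the action of $G$, and the Nash-verification filter --- combine to give all equilibria of the IPIE game in closed form.
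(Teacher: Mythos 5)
Your proposal is correct and takes essentially the same route as the paper, whose entire proof is the one-liner ``follows immediately from the discussion above''---that discussion being precisely the radical tower $\mathbb{Z} = L_0 \subset L_1 \subset \cdots \subset L_n = S$ with $\alpha_i^{n_i} \in L_i$ that you spell out by induction. One remark: the ``hard part'' you flag about roots of unity is a non-issue under the proposition's hypothesis, since each $\sigma \in \mathrm{Gal}(S/\mathbb{Z})$ is an automorphism of $S$, so $\sigma(\alpha_i)$ already lies in $S$, and your own tower induction expresses every element of $S$ in radicals---no passage to a larger radical closure is needed (roots of unity would only matter if one were trying to \emph{derive} radicality from solvability of the group, which the hypothesis here assumes away).
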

                \begin{proof}
                    Follows immediately from the discussion
                    above.\hfill\qed
                \end{proof}

    \section{Computational Complexity}\label{section_complexity}
    {

                \noindent The characterization of equilibria as solutions to a system of polynomial equation is a
                polynomial time operation in the size of input
                payoff matrix, where the size of the matrix is $\mathcal{K}^*$.
                The while loop in side the Algorithm \ref{algorithm: integer game} of Steps (3-14) runs until a
                sample solution of the $\mathcal{GS}$ is computed. For $i\in\{1,\ldots,\mathcal{K}^+\}$
                and for each indeterminate variable $x_i$, let $d_i$
                denote the degree of its univariate polynomial in $\mathcal{I}$ of the $\mathcal{GS}$.
                The while loop of steps (3-14) runs for at most $d = \max_{i} d_i$
                times. Average case running time analysis of the Newton's method
                -- for computing approximation of all the roots of a univariate
                polynomial --
                is studied by Smale \cite{ss81,ss97}.
                A sufficient number of the steps for the Newton's method to obtain an
                approximate zero of a polynomial $f$, are polynomially bounded by the
                degree $d_i$ of the polynomial and $1/\rho$, where $\rho \in (0,1)$ is the probability
                that the method fails. Kuhn's algorithm improves
                efficiency by a polynomial factor and provides
                global convergence.  On the other hand Renegar
                \cite{jr87} studies the problem of computing approximate
                solutions of multivariate system of equations using
                homotopy method and presents an efficient algorithm.
                Note that these results on the complexity analysis
                has underlying assumption that the numerical method
                converges.\\

                \noindent For the Algorithm \ref{algorithm: integer game},
                number of operations, for constructing a minimal polynomial and
                checking its irreducibility over $\mathbb{Q}$,
                are bounded by a polynomial in the size of degree $d$ and maximum norm $H$
                of the minimal polynomial \cite{GG03}. The operation
                of multivariate polynomial division of Step 10 runs
                in $O(M_1\cdot M_2)$ time \cite{MP07}, where each $M_i$ is
                the maximum number of terms in the polynomials considered for
                division. Without loss of generality we let $M = M1 > M2$.\\

                \noindent With these details, we present
                the following complexity bound for computing a sample
                solution with the Algorithm \ref{algorithm: integer game}.

                \begin{proposition}
                    Algorithm \ref{algorithm: integer game} runs polynomial in $O(\mathcal{K}^+d(1/\rho+H+dH+cM))$.
                \end{proposition}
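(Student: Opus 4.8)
The plan is to establish the bound by a careful accounting of the cost of each line of Algorithm~\ref{algorithm: integer game}, weighted by the number of times it is executed, and then to collect the contributions into the stated closed form. I would first isolate the one-time preprocessing in Step~2, the characterization of the Nash equilibria as a system $\mathcal{GS}$, which costs time polynomial in the size $\mathcal{K}^*$ of the payoff array; since the proposition asserts polynomiality with the displayed expression capturing the iterative core of the computation, this preprocessing is recorded as a separate polynomial term and the analysis then concentrates on the main loop of Steps~3--14.

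Next I would read off the nesting structure and attach the per-operation bounds assembled earlier in this section. The key structural claim is that the outer \textbf{while} loop runs at most $d=\max_i d_i$ times: every pass that fails the irreducibility test of Step~9 divides out at least one rational factor in Step~10, strictly lowering the degree of the offending univariate polynomial, and no such polynomial exceeds degree $d$; termination itself is guaranteed by the existence of an irrational equilibrium, which forces Step~14 to be reached. Within a single pass the algorithm processes all $\mathcal{K}^+$ indeterminate coordinates, and for each one incurs three sequential costs: the MVNRM inner loop of Steps~4--7, whose iteration count is polynomially bounded in $d_i$ and $1/\rho$ by Smale's analysis, with the $d_i$ dependence absorbed into the outer factor $d$ so that $O(1/\rho)$ remains; the KLL minimal-polynomial construction and the irreducibility check of Steps~8--9, bounded by a polynomial in $d$ and $H$ and accounting for the $H$ and $dH$ summands; and, on the reducible branch, the multivariate polynomial division of Step~10 running in $O(M_1 M_2)$ time, which under the convention $M=M_1>M_2$ contributes the $cM$ summand.

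Multiplying the outer-loop count $d$, the coordinate count $\mathcal{K}^+$, and the summed per-coordinate cost $1/\rho+H+dH+cM$ then yields $O(\mathcal{K}^+ d(1/\rho+H+dH+cM))$, with the Step~2 preprocessing contributing a separate polynomial additive term. Polynomiality follows at once, since every factor is individually polynomial in the input parameters and a product of polynomials is again polynomial.

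The step I expect to be the principal obstacle is pinning down the MVNRM iteration count cleanly. The average-case bounds of Smale, and the refinements of Kuhn and Renegar cited above, presuppose an explicit convergence hypothesis; here that hypothesis is furnished by Proposition~\ref{result: convergence_criteria}, but the delicate point is the interaction with the stopping rule of Proposition~\ref{result: MVNRM stopping criteria}, which fixes the required precision at $O(d^2+d\log H)$ bits. I would therefore keep the bit-precision cost strictly separate from the iteration count---folding the former into the KLL term rather than the Newton term---so that the two analyses compose without the precision requirement silently inflating the $O(1/\rho)$ estimate.
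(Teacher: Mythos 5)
Your proposal is correct and follows essentially the same route as the paper's own proof: bound the outer while loop by $d=\max_i d_i$, charge $O(1/\rho)$ per MVNRM call via Smale's analysis under the convergence criterion of Proposition~\ref{result: convergence_criteria}, charge $O(dH)$ for the KLL construction and the irreducibility checks, charge $O(cM)$ for the multivariate division with $M_2$ treated as the constant $c$, multiply by the $\mathcal{K}^+$ and $d$ repetition counts, and sum. The only divergence is bookkeeping --- the paper attributes the $H$ summand to KLL being invoked at most $\mathcal{K}^+$ times in total (rather than once per coordinate per pass) and the $dH$ summand to the per-factor irreducibility checks, whereas you lump Steps~8--9 together --- but both attributions land inside the stated bound, so this does not affect correctness at the paper's level of rigor.
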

                \begin{proof}
                    The while loop of (3-14) runs for at most $d$
                    times. Considering the complexity of computing
                    an approximate root of each univariate
                    polynomial with Newton-Raphson's method, the
                    MVNRM with Proposition \ref{result:
                    convergence_criteria} runs polynomial in
                    $O(\mathcal{K}^+d\cdot1/\rho)$. The KLL Algorithm
                    runs in $O(dH)$, requiring at most $\mathcal{K}^+$ repetition in worst case.
                    The operation of checking irreducibility of a minimal polynomial, in
                    worst case, is required for each indeterminate
                    variable and for every factor of the univariate polynomials. The
                    irreducibility check runs polynomial in $O(dH)$.
                    Finally, the multivariate division is called with
                    only a single root of the univariate
                    polynomial, making $M_2$ a constant
                    $c$. The division algorithm is also required for
                    $\mathcal{K}^+d$ times. Summing up all and
                    rearranging terms we get the result.\hfill\qed
                \end{proof}
                \noindent We are not considering the issue of computing the Galois
                groups in this work, i.e. we consider that the Galois groups are known.
                But to make the discussion complete, we give below,
                the complexity of computing a Galois group of the given polynomial.
                Computation of a Galois group requires polynomial time
                in the size of the input polynomial and the order of
                its Galois group. If $f(x)$ has degree $d$ then its
                Galois group can have at most $d!$ elements and so
                in worst case the computation takes exponential time. This is best
                know upper bound due to Landau \cite{sl84}.
                Lanstra \cite{hwl92} surveys computational complexity
                result of computing Galois groups and other related
                problems. \\

                \noindent Once a Galois group $G$ is completely known, we must find
                the Galois-orbit $Gx$ of every known root $x$ of indeterminate variable in the $\mathcal{GS}$.
                An orbit construction takes polynomial time with algorithm suggested by Luks \cite{el93}.
                In the worst case, the
                algorithm requires action of each of the Galois group generator $g' \in G' \subseteq G$ to
                each element of the set of roots. This gives worst case time
                $O(|G'|\cdot |X|)$.
                Finally, the operation of verification of a Nash
                equilibrium solution, runs polynomial
                in the size of total number of strategies $\mathcal{K}^+$.
    }
    \section{Membership}\label{section_membership}
    {
            \noindent With the characterization of
            games as game system $\mathcal{GS}$, in this
            section we outline a method for deciding membership to
            the class of IPIE games.

        \subsection{Method}
        \noindent The games that we consider are known to have integer
        payoff values
        and all irrational equilibria. The irrational equilibrium solutions induce
        irrational ring extensions. With this property, an intuitive and naive approach to
        answer the problem of deciding membership is as follows.
        Given an input game, we characterize all its equilibria as solutions to
        the system of polynomial equations of form $\mathcal{GS}$ in \eqref{game_polynomial_0}.
        After the characterization
        we ask, whether for each indeterminate variable its univariate
        polynomial has linear factors over the field of rational numbers or not.\footnote{
        It is important to note here that the membership to the class of IPIE games can be
        decided by considering irreducibility of the univariate polynomial over $\mathbb{Q}$.
        Justification of this fact comes from the following: if a polynomial
        has all its factor linear over $\mathbb{Q}$, then its roots are either integers or
        rationals. If the polynomial has no linear factors
        over $\mathbb{Q}$ then the roots are either irrational or complex.
        By \cite{jfn51} every game has a mixed strategy Nash equilibrium and so
        strictly a real number. This means that there is at least one irrational
        root of the polynomial.}
        If the polynomial has  no linear factors
        over $\mathbb{Q}$, then we must verify whether the
        solutions -- consisting of linear factors of the polynomial over $\mathbb{Q}$ --
        are Nash equilibria of the input game or not. If any of the solutions is an equilibrium then
        the game is a non-member to IPIE games. Otherwise it is.
        For the polynomial irreducibility test over $\mathbb{Q}$, we make use of
        a polynomial time univariate factorization
        algorithm over $\mathbb{Q}$ from \cite{GG03}.
        For constructing a univariate polynomial from multivariate system of polynomial equation
        $\mathcal{GS}$ we use a Gr\"{o}bner basis of the
        $\mathcal{GS}$.\\

        \noindent We start the membership decision with checking payoff values of the input
        game. If the payoff values are non-integer then we declare
        the input game a non-member to the class of IPIE games. In
        the case otherwise we do the following.

        \numberwithin{algorithm}{section}
        \begin{algorithm}[h]\label{algorithm:membership}
            \caption{\small Algorithm for deciding membership to the class of IPIE games. }
            \small{
            \begin{algorithmic}[1]
                \FOR {each of the indeterminate variable ($i=1$ to
                $\mathcal{K}^+$)}
                    \STATE Apply the Buchberger's Algorithm with the lexicographical order $(x_i \prec x_j), \forall j \neq i$
                    and compute a univariate polynomial in the Gr\"{o}bner bases of the $\mathcal{GS}$.
                    \STATE For the univariate polynomial produced in Step 2, check its irreducibility over $\mathbb{Q}$.
                    \IF {the univariate polynomial has linear factors over $\mathbb{Q}$ }
                        \STATE Substitute each root in the triangular form of the Gr\"{o}bner basis and compute a complete solution tuple corresponding
                        to the root.
                        \IF {the solution tuple verifies to be a Nash equilibrium of the input game}
                            \STATE Declare the input game a non-member to the class of IPIE games and stop.
                       \ENDIF
                   \ENDIF
                \ENDFOR
            \STATE Declare the input game a member to the class of IPIE games.
            \end{algorithmic}
            }
        \end{algorithm}
    }
            \noindent Note that the condition of checking irreducibility
            of each univariate polynomial over $\mathbb{Q}$ rather than $\mathbb{Z}$
            lets us use field algebra, providing a richer set of tools to work with.\\

            \noindent Next, we analyze
            various possibilities of a root generation
            during the process of factorization of the univariate polynomials
            in $\mathcal{GS}$.\\

            \noindent There are mainly five possibilities:
            (1). The first univariate polynomial in the Gr\"{o}bner basis of $\mathcal{GS}$ has
            complete linear factorization over $\mathbb{Q}$. In this situation the game is a
            non-member. (2). The first univariate polynomial has some linear factors over $\mathbb{Q}$.
            Further, substitution of all these linear factors in the triangular form of the Gr\"{o}bner basis
            produce univariate polynomials (in other indeterminate variables) with some linear factors.
            The substitution process generates solutions tuples with
            rational and irrational coordinates.
            There are two possibilities for each of these solution
            tuples. Either the solution is a Nash equilibrium or it is not. In
            former case the game is a
            non-member while in later it may be a member.
            (3). The first univariate polynomial has some linear
            factors but the subsequent univariate polynomials have all the irreducible factors over $\mathbb{Q}$.
            This case is subsumed in case (2) mentioned above.
            (4).The first univariate polynomial has all irreducible factors over $\mathbb{Q}$,
            and subsequent substitutions produce polynomials with rational or integer
            roots. This case is also subsumed in case (2). Finally,
            (5). The first univariate polynomial has all irreducible factors over $\mathbb{Q}$,
            and subsequent substitutions produce all the roots that extend $\mathbb{Q}$.
            In this situation, due to existence of at least one mixed Nash equilibrium \cite{jfn50}, we are
            guaranteed to get at least one irrational solution of the $\mathcal{GS}$. And
            so the game is a member.\\

            \noindent The analysis above, of roots generation during the factorization in
            Algorithm \ref{algorithm:membership}, suggests that
            a solution tuple with rational coordinates must be verified to be a Nash
            equilibrium. In this situation if it is guaranteed that substitution of an irrational root
            produce all irrational roots in subsequent substitution,
            then
            the repeated factorization and verification of the roots can reduce.
            We call this condition \textit{the membership condition}.
            The condition will also improve efficiency of the membership algorithm.\\

            \noindent In the following section, we present a result that
            improves running time of the membership
            Algorithm \ref{algorithm:membership}. The result primarily focuses on
            an algebraic property of the ideal $\mathcal{I}$ of $\mathcal{GS}$.

    \subsection{Result}
    {
            \noindent In this section we present a result that
            allows us to decide membership to the classes of games with
            relatively few root
            computation, factorization and verifications.\\

            \begin{proposition}\label{result:membership lemma}
                If the polynomial ideal $\mathcal{I}$ of the game polynomial $\mathcal{GS}$
                is zero-dimensional, radical and in general position. Then for deciding
                membership to the class of IPIE games exactly one
                irreducible univariate polynomial in the Gr\"{o}bner basis of the $\mathcal{GS}$ is sufficient.
            \end{proposition}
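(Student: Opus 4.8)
The plan is to recognize that the three hypotheses in the statement --- zero-dimensional, radical, and in general position --- are precisely the hypotheses of the classical Shape Lemma, and to let that theorem do the structural work. First I would fix a lexicographic order $x_1 \succ x_2 \succ \cdots \succ x_{\mathcal{K}^+}$ in which the separating variable is placed last, and recall that under these hypotheses the reduced Gr\"obner basis of $\mathcal{I}$ has the triangular ``shape'' form
\begin{equation*}
\bigl\{\, g(x_{\mathcal{K}^+}),\ x_{\mathcal{K}^+-1} - g_{\mathcal{K}^+-1}(x_{\mathcal{K}^+}),\ \ldots,\ x_1 - g_1(x_{\mathcal{K}^+}) \,\bigr\},
\end{equation*}
where $g$ is the unique genuinely univariate element of the basis, $\deg g$ equals the (finite) number of solutions of $\mathcal{GS}$, and each $g_i \in \mathbb{Q}[x_{\mathcal{K}^+}]$ has degree strictly smaller than $\deg g$. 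Radicality is what forces $\deg g$ to count solutions without multiplicity, and general position is what guarantees the single variable $x_{\mathcal{K}^+}$ separates all points so that this shape is attained.

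Second, I would use the shape form to set up a bijection between the roots of the single polynomial $g$ and the solutions of $\mathcal{GS}$: each root $\alpha$ of $g$ yields exactly one solution tuple $(g_1(\alpha),\ldots,g_{\mathcal{K}^+-1}(\alpha),\alpha)$, and conversely every solution arises this way. Since each $g_i$ has rational coefficients, a rational $\alpha$ produces an entirely rational tuple, while an irrational $\alpha$ produces a tuple that is not rational. Thus the rationality of a solution of $\mathcal{GS}$ is governed entirely by the rationality of the corresponding root of $g$ --- which is exactly the \emph{membership condition} isolated in the preceding subsection, now established as an automatic consequence of the Shape Lemma rather than an extra assumption.

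Third, I would reduce the membership decision to a single factorization of $g$ over $\mathbb{Q}$. If $g$ has a linear factor over $\mathbb{Q}$, substitute the resulting rational root into the shape relations to recover the rational tuple and run the Nash-equilibrium verification; if it verifies, the game is a non-member. If instead $g$ is irreducible over $\mathbb{Q}$ of degree at least two, then no root of $g$ is rational, hence no solution of $\mathcal{GS}$ is rational, and, invoking the guaranteed existence of a mixed equilibrium from \cite{jfn51}, the game is a member. In either branch only this one univariate polynomial $g$ is examined, so no separate Gr\"obner computation and factorization is needed for the remaining $\mathcal{K}^+-1$ variables, which is the assertion of the proposition.

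The hard part will be transporting the full IPIE requirement --- that an equilibrium be \emph{totally mixed real-irrational}, i.e. irrational in every coordinate --- across the shape relations, since the argument above directly delivers only that the tuple is not rational (its $x_{\mathcal{K}^+}$-coordinate is irrational). I would close this gap by appealing to the transitivity of the Galois group of the irreducible factor of $g$ on its root set: the coordinate functions $g_i$ send conjugate roots to conjugate coordinate values, so the coordinates of an equilibrium tuple are permuted among themselves by the group and none can individually be fixed in $\mathbb{Q}$ unless the whole tuple degenerates, which the totally-mixed hypothesis on equilibria excludes. Checking that this transitivity argument genuinely rules out a coordinate $g_i(\alpha)$ collapsing to a rational value --- and that it interacts correctly with the general-position normalization --- is the main technical obstacle, and it is where I would spend the bulk of the rigorous verification.
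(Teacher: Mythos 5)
Your proposal follows the same backbone as the paper's own proof: both reduce the proposition to the Shape Lemma, observing that zero-dimensionality, radicality, and general position are exactly its hypotheses, so that the reduced lexicographic Gr\"obner basis takes the triangular form $\{g(x_p),\, x_{p-1}-g_{p-1}(x_p),\ldots,x_1-g_1(x_p)\}$ and the membership decision collapses onto the single polynomial $g$. Where you genuinely diverge is in your fourth step: the paper never confronts the possibility that a coordinate $g_i(\alpha)$ of an irrational root $\alpha$ could collapse to a rational value --- it simply builds the extra hypothesis $1 < \deg h_i < \deg h_p$ into its displayed system and asserts that each coordinate ``extends $\mathbb{Z}$,'' without deriving that degree condition from the Shape Lemma. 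You correctly identify this as the real gap, and your Galois-transitivity idea does close it, though more simply than you fear: since $g_i$ has rational coefficients, $g_i(\alpha)\in\mathbb{Q}$ forces $g_i(\sigma\alpha)=g_i(\alpha)=c$ for every conjugate $\sigma\alpha$, so the irreducible $g$ divides $g_i - c$; the reduced-basis bound $\deg g_i < \deg g$ then forces $g_i \equiv c$. (Equivalently, uniqueness of representation in the power basis $1,\alpha,\ldots,\alpha^{d-1}$ of $\mathbb{Q}(\alpha)$ gives this in one line, with no appeal to transitivity or to the totally-mixed hypothesis.) One residual loophole survives in both your argument and the paper's: if some $g_i$ is genuinely constant with rational value, every solution has a rational coordinate even when $g$ is irreducible, so examining $g$ alone could wrongly certify membership; the paper excludes this silently via its unexplained assumption $\deg h_i > 1$, and your appeal to ``the whole tuple degenerates'' hand-waves the same case. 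Your version is nonetheless the more honest proof, since it surfaces exactly the step the paper buries.
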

            \begin{proof}
                If the $\mathcal{GS}$ over a field $\mathbb{Z}$ has following form
                \begin{eqnarray}\label{eqn: shape form}
                    x_1 - h_1(x_p) = 0 \nonumber \\
                    x_2 - h_2(x_p) = 0 \nonumber \\
                    ...                \nonumber \\
                    x_{p-1} - h_{p-1}(x_p) = 0\nonumber \\
                    h_p(x_p) = 0.
                \end{eqnarray}
                    And if $h_p$ is irreducible over $\mathbb{Z}$ with $1< deg\ h_i <
                    deg\ h_p,\ 1\leq i < p$. Then, for all $i$, root value of $x_i$ will extend $\mathbb{Z}$.
                    This is a sufficient condition for the membership condition.\\

                    \noindent One of the ways to obtain the condition above for $\mathcal{GS}$ defined over
                    some field $\mathbb{F} \subset \mathbb{Z}$ is the Shape Lemma
                    \cite{GG98}. The shape lemma states:
                 Let the ideal $\mathcal{I}$ of a polynomial system be a zero-dimensional
                 ideal in $\mathbb{F}[x_1,\ldots,x_n]$
                 which is in general position with respect to $x_1$, i.e.
                 the projection of $\mathcal{V}_{\mathbb{K}}(\mathcal{I})$
                 onto the 1-st coordinate is injective. Then $\sqrt
                 \mathcal{I}$ has a lexicographical reduced Gr\"{o}bner basis with
                 respect to $x_n \prec \ldots \prec x_1$ of the form:
                 \[
                     \sqrt \mathcal{I} = \langle g_n(x_n), x_{n-1}-g_{n-1}(x_n),\ldots,x_2-g_2(x_n),x_1-g_1(x_n) \rangle
                 \]
                 where $g_n$ is a square-free polynomial and the degree of
                 every $g_i$ doesn't exceed degree $d$ of $g_n$. $\mathbb{K}$
                 is algebraic closure of $\mathbb{F}$.\\

            \noindent The requirement in the shape lemma gives us
            the required condition, and so the
            result.\hfill\qed
                \end{proof}
                \noindent The result above throws more light on the structure
                of a game system $\mathcal{GS}$ for the class of
                games that we consider. Moreover, it improves
                the computational complexity of Algorithm
                \ref{algorithm:membership}.\\

        \noindent Next we present the correctness of the membership
        Algorithm \ref{algorithm:membership}.
        \begin{proposition}
            The Algorithm \ref{algorithm:membership} for deciding
            membership to the class of IPIE games works.
        \end{proposition}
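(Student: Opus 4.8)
The plan is to show correctness of Algorithm~\ref{algorithm:membership} by tracing its behavior against the two possible cases --- the input game is a member of IPIE, or it is not --- and verifying that the algorithm reports the correct answer in each. The key tools are the characterization of equilibria as solutions of $\mathcal{GS}$ of the form \eqref{game_polynomial_0}, the footnote observation that a univariate polynomial has a linear factor over $\mathbb{Q}$ exactly when it contributes a rational (or integer) coordinate, and Nash's existence result \cite{jfn51}, which guarantees at least one real equilibrium and hence (for a member) at least one irreducible univariate polynomial over $\mathbb{Q}$.

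First I would establish termination and well-definedness: the \textbf{for} loop runs over the finitely many indeterminate variables $i = 1,\ldots,\mathcal{K}^+$, and in each pass the Buchberger step produces a univariate polynomial in the Gr\"obner basis, whose irreducibility over $\mathbb{Q}$ is decided by the polynomial-time factorization algorithm of \cite{GG03}. Since every sub-step terminates and the loop range is finite, the algorithm always halts, reaching either Step~7 or the concluding Step~11. Next I would handle soundness of a \emph{non-member} verdict. If the algorithm declares a non-member at Step~7, it has found a univariate polynomial with a linear factor over $\mathbb{Q}$, substituted the corresponding rational root into the triangular Gr\"obner basis to reconstruct a full solution tuple of $\mathcal{GS}$, and verified (via the polynomial-time Nash verification of \cite{rg05}) that this tuple is a genuine Nash equilibrium. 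A rational equilibrium witnesses that the game fails to have \emph{all} irrational equilibria, so by the definition of IPIE games the verdict is correct.

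The converse direction --- that a \emph{member} verdict at Step~11 is correct --- is where I would lean on the root-generation case analysis preceding the statement. If the loop completes without declaring a non-member, then for each variable either the univariate polynomial was irreducible over $\mathbb{Q}$, or every rational solution tuple it produced failed Nash verification. By the five-case analysis, this is precisely the situation in which no equilibrium of the game has a rational coordinate; combined with Nash's guarantee of at least one real equilibrium, every equilibrium is totally mixed and irrational, so the game is indeed a member of IPIE. Here I would invoke Proposition~\ref{result:membership lemma} to note that, under the zero-dimensional, radical, general-position hypotheses, a single irreducible univariate polynomial already forces all coordinates to extend $\mathbb{Z}$, streamlining the argument.

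The main obstacle will be the soundness of the member verdict rather than the non-member verdict: I must argue that rejecting every rational \emph{solution tuple of $\mathcal{GS}$} as a non-equilibrium genuinely rules out rational \emph{equilibria}, keeping in mind the standing caveat that solutions of $\mathcal{GS}$ properly contain the Nash equilibria (the inequalities having been converted to equalities). The delicate point is ensuring the substitution-and-verification in Steps~5--6 enumerates enough tuples that no rational equilibrium escapes detection; this is exactly the gap that the membership condition and Proposition~\ref{result:membership lemma} are designed to close, so I would make the correctness argument conditional on --- or explicitly invoke --- that structural hypothesis where the enumeration would otherwise be incomplete.
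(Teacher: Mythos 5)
Your proposal is correct and its skeleton matches the paper's proof: a trace of Algorithm~\ref{algorithm:membership} establishing termination (finite loop over the $\mathcal{K}^+$ variables, Buchberger guaranteed to yield a univariate polynomial by finiteness of the solution set, polynomial-time factorization from \cite{GG03}), soundness of the non-member verdict via the Nash-verification step of \cite{rg05}, and soundness of the member verdict at the final step, with Nash existence supplying the irrational equilibrium. The genuine difference is in how the member verdict is handled. The paper's proof is unconditional and looser: it concludes by asserting that at the end of the loop the univariate polynomials ``have all non-rational factors,'' which is not literally what the algorithm ensures (a polynomial may well have linear factors over $\mathbb{Q}$ whose solution tuples all failed verification), and it never confronts whether the substitution in Steps~5--6 enumerates \emph{every} solution tuple in the fiber over a rational root, so that no rational or partially rational equilibrium escapes. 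You state the disjunction correctly (each polynomial is either irreducible or all its rational tuples failed verification), explicitly flag the fiber-enumeration gap, and close it by invoking the five-case analysis and Proposition~\ref{result:membership lemma}, making the member-verdict correctness conditional on the zero-dimensional, radical, general-position hypotheses where the enumeration would otherwise be incomplete. The paper instead defers Proposition~\ref{result:membership lemma} to a remark after the proof (that the loop need run only once), using it for efficiency rather than correctness. What your route buys is rigor at precisely the point where the paper's argument is weakest; what the paper's route buys is an unconditional claim, at the cost of glossing over the gap you correctly identify.
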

        \begin{proof}
            \noindent Input to the Algorithm \ref{algorithm:membership} is a finite
            normal form game characterized as $\mathcal{GS}$ of the
            form \eqref{game_polynomial_0}.\footnote{We recall that the type of
            games that we consider, in this work, are known to have finitely
            many equilibria solutions.} All the coefficients of
            the $\mathcal{GS}$ must be integral, otherwise the game
            is a non-member to the class of IPIE games.\\

            \noindent In case all the payoff values are integer,
            then the for loop of Steps 1-10 executes. The other
            property,
            of the input game, that we have to check is, whether
            all its solutions are irrational or not. For checking
            the irrationality
            of each
            indeterminate variable in the $\mathcal{GS}$, we must
            check
            whether its univariate polynomial has all its factors irreducible over $\mathbb{Q}$ or not. To
            obtain a univariate polynomial in each indeterminate
            variable, with
            different lexicographical orders, the Buchberger's algorithm is called. Due to the finiteness of
            number of equilibria solutions of the input game and
            Buchberger \cite{bb01} we are guaranteed to get a univariate
            polynomial every time.\\

            \noindent For checking irreducibility of each of the
            univariate polynomials over $\mathbb{Q}$, algorithm from
            \cite{GG03} is used. The polynomial time irreducibility
            check algorithm takes Algorithm
            \ref{algorithm:membership} to Step 4 every time.\\

            \noindent For each of the linear factors of a
            univariate polynomial a solution tuple is constructed.
            These solutions are then verified to be Nash equilibrium
            of the input game. If one of these
            solutions is a Nash equilibrium, then the method stops
            with declaring the input game a non-member to the class of IPIE
            games.\\

            \noindent With the finiteness of the following: degree bound of the degrees of the
            $\mathcal{GS}$, the degrees of each univariate polynomial and number of strategies,
            we are guaranteed to reach Step 10 of the Algorithm \ref{algorithm:membership}.\\

            \noindent At the end of the for loop, univariate
            polynomials of all the indeterminate variables have all
            non-rational factors and so the input game is declared a
            member to the class of IPIE games.\hfill\qed
        \end{proof}
        \noindent Note that in the light of Proposition \ref{result:membership
        lemma}, the for loop in the Algorithm \ref{algorithm:membership} must be run exactly once.
        And so Algorithm \ref{algorithm:membership} works with the condition in Proposition \ref{result:membership
        lemma}.\\

        \noindent Running time of the Algorithm \ref{algorithm:membership} is primarily dominated by the
        preprocessing task of constructing univariate polynomials.
        The Buchberger's algorithm for constructing univariate polynomials takes doubly exponential time in
        the number of indeterminate variables $\mathcal{K}^+$.\\

        \noindent Our game system is of finite
        size in terms of degree, number of indeterminate variables and its
        norm. And so, keeping aside running time of the Buchberger's
        algorithm, the membership Algorithm \ref{algorithm:membership}
        runs in polynomial time. Further,
        Proposition
        \ref{result:membership lemma} improves running time efficiency of
        the naive algorithm.\\

        \noindent It is important
        to note that, the method for deciding membership does not assume that the Galois groups are
        known. Also, for deciding the membership, the algorithm \ref{algorithm:membership} does not
        compute all the solutions of the $\mathcal{GS}$.

    }

    \section{Conclusion}\label{section_conclusion}
    {

        In this article, we considered the problem of computing all
        the equilibria of a subclass of finite normal form. Our intention
        was to use knowledge of a sample equilibrium for computing all the
        other equilibria. By defining the class of IPIE games, and presenting
        an algorithm for computing all its equilibria, we have addressed
        the problem partially. For computing a
        sample solution we used MVNRM with the KLL algorithm.
        Though MVNRM is not globally convergent but offers significant speed.
        For convergence guarantee we derived a condition.
        With the use of KLL algorithm we computed equilibria in algebraic form
        contrary to traditional approach of keeping solutions in approximation
        form. We also factored out unwanted roots during the sample
        solutions generation, keeping the MVNRM away from local
        minima.
       The sample solution computation algorithm, with the convergence
        condition, is a definitive method for computing exact equilibria
        rapidly.\\

        \noindent Further, for computing all the other equilibria we used
        knowledge of Galois groups. The assumption of
        known Galois groups can easily be relaxed by adjoining an algorithm
        for computing Galois groups. Construction of minimal
        polynomials using the KLL Algorithm and the use of
        Tschruhaus transformation over minimal polynomials help relax
        the criteria.\\

                    \noindent Algorithm that we suggest for solving system of equation
            may not be time efficient for large problems in
            practice, but it is
            time efficient compared to similar method
            based only on Gr\"{o}bner bases.  Algebraic approach that we consider
            in this paper throws more light on
            structure of solutions of class of games.\\

            \noindent Based on the algebraic model in the Section 2,
            we presented an outline of a method for deciding membership to
            the class of IPIE games. We also presented a result for
            improving efficiency of the naive method, and presented the
            correctness of the membership algorithm. \\

            \noindent The complimentary work to the presented work would be
            to prove a result concerning the existence
            of IPIE games in general.\\

            \noindent For the class of IPIE games, the algorithms we suggest are new and
            should be considered as precursor to efficient algorithms in
            future.\\

    \section{Appendix: Example}\label{example: integer games}
    {

        In this section we present an example to show working of the
        algorithms presented. The input 3 players 2 strategy game
        is given by the following payoff matrix. The game is defined
        in \cite{NCH03}.
        \begin{table}[h]
            \centering
            \small{
            \begin{tabular}{c r|c|c|}
                            &\multicolumn{1}{c}{ } & \multicolumn{1}{c}{\textbf{A}} &\multicolumn{1}{c}{\textbf{B}} \\\cline{3-4}
                            &\textbf{a} &\ 3, 0, 2\ &\ 0, 2, 0\ \\\cline{2-4}
                \textbf{1} &\textbf{b} &\ 0, 1, 0\ &\ 1, 0, 0\ \\\cline{1-4}
                \textbf{2} &\textbf{a} &\ 1, 0, 0\ &\ 0, 1, 0\ \\\cline{2-4}
                            &\textbf{b} &\ 0, 3, 0\ &\ 2, 0, 3\
                            \\\cline{2-4}\
            \end{tabular}
                        \caption{Payoff matrix of a 3-player 2-strategy IPIE game. Player 1 and  2's
                        strategies are indicated by a, b and A, B respectively. Player 3's
                        strategies are 1 and 2. Each entry in the matrix indicates player 1, 2
                        and 3's payoff for their respective strategies.
                        \label{tab:PayoffMatrixFor3Player2LinkGame}}
                        }
        \end{table}\\
        \noindent For $i \in \{1,2\}$, we denote $x_i, y_i$ and $z_i$ as
        player 1, 2 and 3's $i^{th}$ strategy, respectively. First, we characterize all the Nash
        equilibria of the game in Table
        \ref{tab:PayoffMatrixFor3Player2LinkGame}
         as solutions of the $\mathcal{GS}$.\\

        \noindent First we decide membership of the input game to class of IPIE games.
        After the characterization of all the Nash
        equilibria of the game above, we apply the Buchberger's algorithm
        to a generate univariate polynomial in a Gr\"{o}bner basis with lexicographical order
        $z_1 \prec x_1 \prec y_1$ and get,
        \begin{equation}\label{gro_in_y}
            y_1^4 + 7y_1^3 -11y_1^2 +3y_1 =0.
        \end{equation}
        \noindent The polynomial has $y_1(y_1-1)(y_1^2 + 8y_1-3)$ as its factorization over
        $\mathbb{Q}$. Substituting rational roots in
        the triangular form of the Gr\"{o}bner basis we get $(0,0,0)$ and
        $(1,1,1)$ as its two solutions with rational coordinates. Verification of these
        solutions as Nash equilibrium of the input game reveals that neither constitute an equilibrium of
        the game. We can repeat the procedure above for $x_1$ and $z_1$ with different lexicographical
        order.\footnote{We do not require to repeat the procedure for $x_2,y_2$ and $z_2$ because its a
        two strategy game.}. We see
        that the ideal of the game polynomial for this game follows
        the shape lemma and so the irreducible factor of indeterminate $y_1$
        guarantees that the game is a member to the class of IPIE
        game.\\

       \noindent Next we apply equilibria computation algorithm for the game above and
        compute its equilibria using group action. With the initial guess of the solution tuple
        consisting of all the 0's, $d=2$ and $H=3$
         we apply MVNRM and compute an approximate sample solution
         tuple as follow.
        \begin{equation}\label{gro_in_x}
            x_1 := 0.7282202113;\quad y_1 := 0.3588989435;\quad z_1 := 0.4717797888
        \end{equation}
        \noindent Applying the KLL
        algorithm on the solution above generates univariate polynomials as follows.
        \begin{equation}\label{factor_y}
            5x_1^2 - 16x_1 + 9 =0 ;\quad  y_1^2 + 8y_1 - 3 =0;\quad 5z_1^2 + 4z_1
            -3 =0
        \end{equation}
        \noindent These polynomials are irreducible over
        $\mathbb{Z}$ and has a Galois group \{id,conjugate\}, isomorphic to
        $\mathbb{Z}_2$.
        With minimal polynomials, we compute a solution of the $\mathcal{GS}$ in closed form.
        Let one such solution be,
        \begin{equation}\label{example_solution}
            x =\frac{1}{5}(8+\sqrt 19); y=-4-\sqrt 19;z = \frac{1}{5}(-2-\sqrt
            19).
        \end{equation}
        This is a sample solution of the game system.
        Next we perform Galois group action on the sample solution.
        Once all the solutions are computed, we reject non-equilibria
        solution of the game with the polynomial time verification algorithm
        \cite{rg05}. This gives us
        the unique irrational equilibrium of the game depicted in Table \ref{tab:PayoffMatrixFor3Player2LinkGame},
        \begin{equation}\label{example_solution}
            x =\frac{1}{5}(8-\sqrt 19); y=-4+\sqrt 19;z = \frac{1}{5}(-2+\sqrt
            19).
        \end{equation}

    \bibliographystyle{splncs}
    \bibliography{references}

\begin{thebibliography}{10}

\bibitem{jfn51}
Nash, J.:
\newblock Non-cooperative games.
\newblock The Annals of Mathematics, Second Series, Issue 2 \textbf{54} (1951)
  286--295

\bibitem{JCH73}
Harsanyi, J.C.:
\newblock Oddness of the number of equilibrium points: A new proof.
\newblock International Journal of Game Theory \textbf{2} (1973)  235--250

\bibitem{CD06}
Chen, X., Deng, X.:
\newblock Settling the complexity of two-player nash equilibrium.
\newblock In: FOCS '06: Proceedings of the 47th Annual IEEE Symposium on
  Foundations of Computer Science, Washington, DC, USA, IEEE Computer Society
  (2006)  261--272

\bibitem{DGP06}
Daskalakis, C., Goldberg, P.W., Papadimitriou, C.H.:
\newblock The complexity of computing a nash equilibrium.
\newblock In: STOC '06: Proceedings of the thirty-eighth annual ACM symposium
  on Theory of computing, New York, NY, USA, ACM (2006)  71--78

\bibitem{MM96}
McKelvey, R.D., McLennan, A.:
\newblock Computation of Equilibria in Finite Games.
\newblock In: Handbook of Computational Economics. Elsevier (1996)  87--142

\bibitem{GK00}
Geissler, K., Kluners, J.:
\newblock Galois group computation for rational polynomials.
\newblock Journal of Symbolic Computation \textbf{11} (2000)  1--23

\bibitem{chr65}
Chase, S.U., Harrison, D.K., Rosenberg, A.:
\newblock Galois theory and galois cohomology of commutative rings.
\newblock Memoirs of the American Mathematical Society (52) (1965)  15--33

\bibitem{EM03}
Enge, A., Morain, F.:
\newblock Fast decomposition of polynomials with known galois group.
\newblock In: Applied Algebra, Algebraic Algorithms and Error-Correcting Codes.
  Volume 2643. (2003)  254--264

\bibitem{sw86}
Segal, R., Ward, R.L.:
\newblock Weight distributions of some irreducible cyclic codes.
\newblock Mathematics of Computation \textbf{46}(173) (1986)  341--354

\bibitem{bmk71}
Kiernan, B.M.:
\newblock The development of galois theory from lagrange to artin.
\newblock Archive for History of Exact Sciences \textbf{8}(1-2) (1971)  40--154

\bibitem{Brodzik2006}
Brodzik, A.:
\newblock On the fourier transform of finite chirps.
\newblock IEEE Signal Processing Letters \textbf{13}(9) (2006)  541--544

\bibitem{BVV07}
B\'{a}r\'{a}ny, I., Vempala, S., Vetta, A.:
\newblock Nash equilibria in random games.
\newblock Random Struct. Algorithms \textbf{31}(4) (2007)  391--405

\bibitem{JLY02}
Jensen, C.U., Ledet, A., Yui, N.:
\newblock Generic Polynomials : Constructive Aspects of the Inverse Galois
  Problem.
\newblock Number~45 in Mathematical Sciences Research Institute Publications.
  Cambridge University Press (2002)

\bibitem{or99}
Osborne, M.J., Rubinstein, A.:
\newblock A Course in Game Theory.
\newblock MIT Press (1999)

\bibitem{dc04}
Cox, D.:
\newblock Galois Theory.
\newblock John Wiley \& Sons (2004)

\bibitem{jfn50}
Nash, J.F.:
\newblock Equilibrium points in n-person games.
\newblock In: Proceedings of the National Academy of Sciences of the United
  States of America. Volume~36 of (Jan. 15, 1950). (1950)  48--49

\bibitem{kll84}
Kannan, R., Lenstra, A.K., Lovasz, L.:
\newblock Polynomial factorization and nonrandomness of bits of algebraic and
  some transcendental numbers.
\newblock In: STOC '84: Proceedings of the annual ACM symposium on Theory of
  computing, ACM (1984)  191--200

\bibitem{GG03}
Gathen, J.V.Z., Gerhard, J.:
\newblock Modern Computer Algebra.
\newblock Cambridge University Press, New York, NY, USA (2003)

\bibitem{rg05}
Gandhi, R.:
\newblock Selfish routing and network creation games.
\newblock Master's thesis, Dhirubhai Ambani Institute of Information and
  Communication Technology, Gandhinagar, Gujarat, India (2005)

\bibitem{ss81}
Smale, S.:
\newblock The fundamental theorem of algebra and complexity theory.
\newblock Bulletines of American Mathematical Society \textbf{4} (1981)  1--36

\bibitem{ss97}
Smale, S.
\newblock In: Complexity theory and numerical analysis. Cambridge University
  Press (1997)  523--551

\bibitem{jr87}
Renegar, J.:
\newblock On the efficiency of newton's method in approximating all zeros of a
  system of complex polynomials.
\newblock Mathematics of Operations Research \textbf{12}(1) (1987)  121--148

\bibitem{MP07}
Monagan, M., Pearce, R.:
\newblock Polynomial Division Using Dynamic Arrays, Heaps, and Packed Exponent
  Vectors.
\newblock Lecture Notes in Computer Science. In: Computer Algebra in Scientific
  Computing. Springer (2007)  295--315

\bibitem{sl84}
Landau, S.:
\newblock Polynomial time algorithms for galois groups.
\newblock In: EUROSAM '84: Proceedings of the International Symposium on
  Symbolic and Algebraic Computation, London, UK, Springer-Verlag (1984)
  225--236

\bibitem{hwl92}
Lenstra, J.H.W.:
\newblock Algorithms in algebraic number theory.
\newblock Bulletin of the American Mathematical Society \textbf{26}(2) (1992)
  211--244

\bibitem{el93}
Luks, E.M.:
\newblock Permutation groups and polynomial-time computation.
\newblock In: Groups and Computation II, DIMACS series in Discrete Mathematics
  and Theoretical Computer Science. Volume~11. (1993)  139--175

\bibitem{GG98}
Gonzalez-Lopez, M.J., Gonzalez-Vega, L.
\newblock Number 251 in London Mathematical Society Lectures Notes Series. In:
  Newton Identities in the multivariate case: Pham Systems. Cambridge
  University Press (1998)  351 -- 366

\bibitem{bb01}
Buchberger, B.:
\newblock Groebner bases: A short introduction for systems theorists.
\newblock In Moreno-Diaz, R., Buchberger, B., Freire, J., eds.: EUROCAST 2001 -
  8th International Conference on Computer Aided Systems Theory - Formal
  Methods and Tools for Computer Science. Volume 2178 of Lecture Notes in
  Computer Science., Berlin, Springer-Verlag (2001)  1 -- 19

\bibitem{NCH03}
Nau, R., Canovas, S.G., Hansen, P.:
\newblock On the geometry of nash equilibria and correlated equilibria.
\newblock International Journal of Game Theory \textbf{32} (2003)  443--453

\end{thebibliography}

\end{document}